\newtheorem{theorem}{Theorem}
\newtheorem{corollary}{Corollary}
\newtheorem{claim}{Claim}
\newcommand{\arc}[2]{(#1,#2)}
\newcommand{\edge}[2]{\{#1,#2\}}
\newcommand{\inneighbor}[2]{N^-_{#1}{(#2)}}
\newcommand{\outneighbor}[2]{N^+_{#1}{(#2)}}
\newcommand{\outneighborc}[2]{N^+_{#1}{[#2]}} 
\newcommand{\prob}[1]{{\sc{#1}}}
\newcommand{\cut}[3]{{{\textsf{cut}}}_{#3}(#1, #2)}
\newcommand{\nph}{{\sf{NP}}-hard}
\newcommand{\apxh}{{\sf{APX}}-hard}
\newcommand{\nphns}{{\sf{NP}}-hardness}
\newcommand{\mprob}{{\sc{Energy-Saving Partition of DAG}}} 
\newcommand{\mprobshort}{{\sc{ESP-DAG}}}
\newcommand{\probva}{{\sc{Size Bounded Energy-Saving Bipartition of DAG}}}
\newcommand{\probvashort}{{\sc{SB-ESBP-DAG}}}
\newcommand{\wa}{{\sf{W[1]}}}
\newcommand{\wah}{{\sf{W[1]}}-hard}
\newcommand{\wahns}{{\sf{W[1]}}-hardness}
\newcommand{\transt}[3]{{\textsf{Tran}_{#3}(#1, #2)}}
\newcommand{\btranst}[3]{{\textsf{BTran}_{#3}(#1, #2)}}
\newcommand{\bigo}[1]{O(#1)}
\newcommand{\fpt}{{\sf{FPT}}}
\newcommand{\yes}{Yes}
\newcommand{\yesins}{{\yes}-instance}
\newcommand{\abs}[1]{|#1|}
\newcommand{\setmid}{:}
\newcommand{\babs}[1]{\left|#1\right|}
\newcommand{\probdef}[3]
{
\begin{center}
\begin{tabular}{|l p{0.85\textwidth}|}\hline
\multicolumn{2}{|l|}{\sc{#1}} \\ \hline
{\bf{Input:}}    & #2 \\ 
{\bf{Question:}} & #3 \\ \hline
\end{tabular}
\end{center}
}
\begin{document}

\title{On the Complexity of Minimizing Energy Consumption of Partitioning DAG Tasks}

\author{Wei Liu$^1$, Jian-jia Chen$^1$, Yongjie Yang$^2$}
\date{\small{$^1$Department of Computer Science, TU Dortmund University, Dortmund, 44227, Germany\\
{\textsf{liuwei-neu@hotmail.com, jian-jia.chen@cs.tu-dortmund.de}}\\[3mm] 
$^2$Chair of Economic Theory, Saarland University, Saarbr\"{u}cken, 66123, Germany\\
{\textsf{yyongjiecs@gmail.com}}}}

\maketitle

\begin{abstract}
We study a graph partition problem where the input is a directed acyclic graph (DAG) representing tasks as vertices and dependencies between tasks as arcs. The goal is to assign the tasks to~$k$ heterogeneous machines in a way that minimizes the total energy consumed for completing the tasks.  We first show that the problem is {\nph}. Then, we present polynomial-time algorithms for two special cases: one where there are only two machines, and another where the input DAG is a directed path. Finally, we examine a variant where there are only two machines, with one capable of executing a limited number of tasks, and demonstrate that this special case remains computationally hard.
%
\end{abstract}

\section{Introduction}
Tasks that are represented by directed acyclic graphs (DAG tasks) are ubiquitous in many applications, including for instance cloud computing, deep neural network, workflow scheduling, etc.~\cite{DBLP:conf/icpp/Duan021,DBLP:journals/csur/KwokA99,DBLP:conf/lcn/LiLLXJ21,DBLP:journals/tc/LinSUGRC23,DBLP:journals/tjs/RajakKPRD23,DBLP:conf/iclr/Thost021}. In this paper, we investigate the complexity of a new graph partition problem which models the scenario where DAG tasks are deemed to be assigned to~$k$ heterogeneous machines (e.g., execution units in distributed systems, clusters of cores in heterogeneous multicore systems, etc.), with the objective to minimize the energy consumption for the computation of these tasks under natural restrictions. More precisely, in this problem, we are given a DAG whose vertices represent tasks, in which the energy consumption of a task depends on which machine is allocated for its execution. An arc from a task~$a$ to a task~$b$ means that the computation of~$b$ requires the output of task~$a$. So, if~$a$ and~$b$ are assigned to different machines, the output of~$a$ needs to be transferred to the machine executing the task~$b$, which also incurs energy consumption. We note that when there are multiple outneighbors of a task~$a$ assigned to a machine~$i$ different from that of~$a$, we need to transfer the output of~$a$ to the machine~$i$ only once. 

We assume that the energy consumption associated with data transfer depends solely on the volume of transferred data, and that the impact of other factors is negligible. Under this assumption, energy consumption for data transfer can be modeled as a univariate function that maps tasks to numerical values. In a more general scenario, energy consumption could also consider additional factors such as the type of data being transferred, the identities of the machines involved, and other relevant parameters. Nevertheless, we prove that even in our simplified case, the problem remains {\nph}. 

Now we formulate the problems. We assume the reader is familiar with the basics in graph theory~\cite{DBLP:books/sp/BG2018} and parameterized complexity theory~\cite{DBLP:books/sp/CyganFKLMPPS15,DBLP:series/txcs/DowneyF13,DBLP:journals/siamcomp/DowneyF95}. 

A {\emph{directed graph}} ({\emph{digraph}}) is a tuple $G=(V, A)$ where~$V$ is a set of vertices and~$A$ is a set of arcs. Each {\emph{arc}} is defined as an ordered pair of vertices. An arc from a vertex~$v$ to a vertex~$u$ is denoted by~$\arc{v}{u}$. We say that the arc $\arc{v}{u}$ {\emph{leaves}}~$v$ and {\emph{enters}}~$u$. We also use~$A(G)$ to denote the set of arcs of~$G$, and use~$V(G)$ to denote the set of vertices of~$G$. The set of {\emph{outneighbors}} (respectively, {\emph{inneighbors}}) of a vertex~$v\in V(G)$ is defined as $\outneighbor{G}{v}=\{u\in V(G) \setmid \arc{v}{u}\in A(G)\}$ (respectively, $\inneighbor{G}{v}=\{u\in V(G) \setmid \arc{u}{v}\in A(G)\}$). The set of {\emph{closed outneighbors}} of~$v\in V(G)$ is defined as $\outneighborc{G}{v}=\outneighbor{G}{v} \cup \{v\}$. Vertices  without any outneighbors are called {\emph{sinks}}, and those without inneighbors are called {\emph{sources}} of~$G$. A {\emph{DAG}} is a digraph without directed cycles. 
For a vertex $v\in V(G)$ and a subset $S\subseteq V(G)$,  we define
\begin{equation*}
    {\bf{1}}_G(v, S)=
    \begin{cases}
        1, & \text{if }\outneighbor{G}{v}\cap S\neq \emptyset,\\
        0, & \text{otherwise},\\
    \end{cases}
\end{equation*}
which indicates whether~$v$ has at least one outneighbor from~$S$ in~$G$.

For a function $f: X\rightarrow Y$ and an element $y\in Y$, we use~$f^{-1}(y)$ to denote the set consisting of all $x\in X$ such that $f(x)=y$. 

For an integer~$k$, let~$[k]$ denote the set of all positive integers less than or equal to~$k$. We also use~$[k]$ to represent a set of~$k$ machines.   
Let $f: V(G)\rightarrow [k]$ be a function that assigns the vertices of~$G$ to the~$k$ machines. Additionally, 
let $p: V(G)\times [k]\rightarrow \mathbb{R}_{\geq 0}$ be a function where $p(v, i)$ defines the energy consumed for computing task~$v$ on machine~$i$, for each $v\in V(G)$ and $i\in [k]$. Furthermore, let $q: V(G)\rightarrow \mathbb{R}_{\geq 0}$ be a function that specifies the energy consumption for transferring the outputs of the vertices in~$G$. The total energy consumption required to transfer all necessary data under the assignment~$f$ is then defined as: 
\begin{equation}
\label{eq-transmiting-time}
    \transt{f}{q}{G}=\sum_{i\in [k]} \sum_{v\in f^{-1}(i)} \sum_{j\in [k]\setminus \{i\}}  q(v) \cdot {\bf{1}}_G(v, f^{-1}(j)).
\end{equation}

The {\mprob} (\mprobshort) problem is defined as follows. 

\probdef
{\mprobshort}
{A DAG~$G$, two functions $p: V(G)\times [k]\rightarrow \mathbb{R}_{\geq 0}$ and $q: V(G)\rightarrow \mathbb{R}_{\geq 0}$, a rational number~$r$.}
{Is there an assignment $f: V(G)\rightarrow [k]$ so that 
\[\left(\sum_{v\in V(G)} p(v, f(v))\right) + \transt{f}{q}{G}\leq r?\]}

See Figure~\ref{fig-DAG-illustraion} for an illustration of the problem.

\begin{figure}[ht!]
    \centering
    \includegraphics[width=0.95\textwidth]{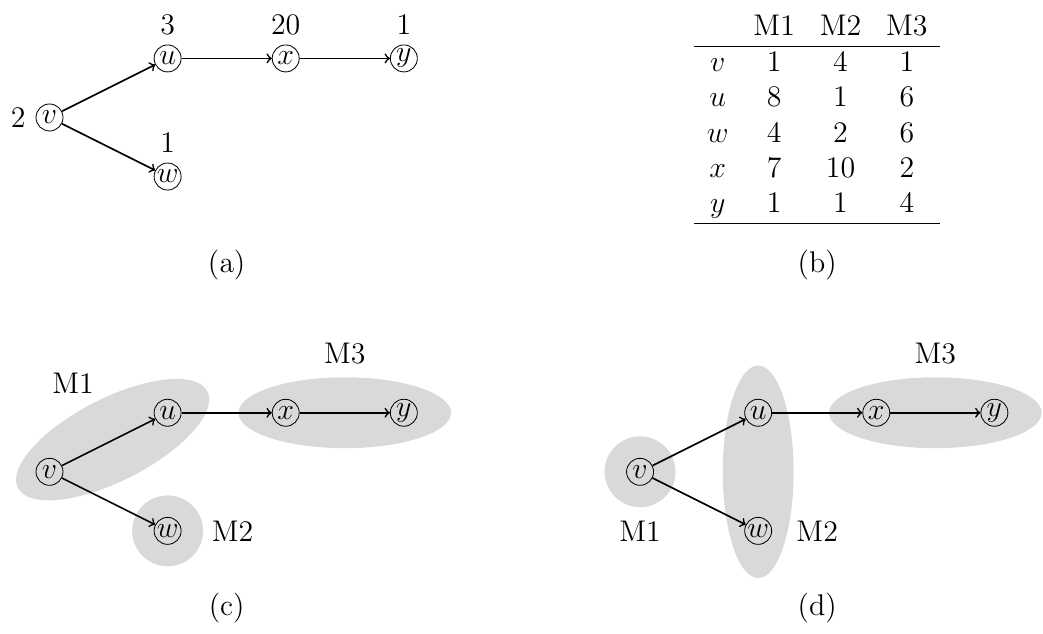}
    \caption{An illustration of the {\mprobshort} problem. (a) A DAG $G$. The number next to each task (vertex) represents the energy consumption required to transfer the task's output. 
    (b) Energy consumption for executing the tasks on three machines: M1, M2, and M3. 
    (c) and (d) are two different assignments of the tasks to the machines. The corresponding energy consumption are respectively $(1+8+2+2+4)+(2+3)=22$ (c) and $(1+1+2+2+4)+(2+3)=15$ (d).}
    \label{fig-DAG-illustraion}
\end{figure}

We also investigate a natural variant of the special case where $k=2$, with the restriction that one of the two machines can execute only a limited number of tasks. For clarity, we reformulate this variant as follows. For a digraph~$G$, two disjoint subsets $V_1, V_2\subseteq V(G)$, and a function $q: V(G)\rightarrow \mathbb{R}_{\geq 0}$, let 
\begin{equation}
\label{eq-bipartition}
    \btranst{(V_1, V_2)}{q}{G}=\sum_{i\in [2]} \sum_{v\in V_i} q(v) \cdot {\bf{1}}_G(v, V_{3-i}).
\end{equation}

\probdef
{{\probva} (\probvashort)}
{A DAG~$G$, two functions $p: V(G)\times [2]\rightarrow \mathbb{R}_{\geq 0}$ and $q: V(G)\rightarrow \mathbb{R}_{\geq 0}$, a rational number~$r$, and an integer $\ell$.}
{Are there disjoint $V_1, V_2\subseteq V(G)$ such that $V_1\cup V_2=V(G)$, $\abs{V_1} \leq \ell$, and  
\[\left(\sum_{v\in V_i, i\in [2]} p(v, i)\right)+\btranst{(V_1, V_2)}{q}{G}\leq r?\]}

\paragraph{Our Main Contributions}
We first establish a complexity dichotomy for {\mprobshort} with respect to the number of machines: the problem is {\nph} if there are at least three machines (Theorem~\ref{thm-DAG-k-NP-hard}), but becomes polynomial-time solvable when there are two machines~(Theorem~\ref{thm-p-k=2}). 
Afterwards, we show that when the input DAG degenerates to a directed path\footnote{Such a DAG is also called a chain in the literature.}, the problem becomes polynomial-time solvable, regardless of the number of machines (Theorem~\ref{thm-P-path}). Additionally, we show that the size-bounded variant {\probvashort}, where there are only two machines, is computationally hard: it is {\wah} with respect to the parameter~$\ell$ (Theorem~\ref{thm-bounded-wah}). As a byproduct of this result, we show that a variant of a minimum cut problem is {\wah} with respect to a natural parameter (Theorem~\ref{thm-sb-minimum-cut-wah}), strengthening its {\nphns} studied in~\cite{DBLP:journals/tcs/ChenSSHY16}.

\section{Related Works}
To the best of our knowledge, the {\mprobshort} problem has not been studied in the literature. However, the problem and its variant defined in the previous section belong to the gigantic family of graph partition problems, which have been extensively and intensively studied in the literature. These problems aim at  dividing either the vertices or the arcs/edges of a given digraph/graph into several sets so that certain structural properties are met or certain optimization objections are achieved (see, e.g.,~\cite{DBLP:journals/siamdm/AlonM11,DBLP:journals/tcs/AndersenBY20,DBLP:journals/tcs/Bang-JensenBHY18,DBLP:journals/siamcomp/BansalFKMNNS14,DBLP:journals/tcs/BodlaenderJ95,DBLP:journals/corr/abs-2307-01109,DBLP:journals/tcs/OumSV14,DBLP:journals/mst/ShachnaiZ17}). 

More specifically, the {\mprobshort} problem falls into the category of task allocation problems~\cite{DBLP:journals/mansci/ErnstJK06,DBLP:journals/tcom/KoprasBIKB22,DBLP:journals/tpds/LiuZGHZS14,DBLP:journals/tpds/PaganiPSCH17}. 
Partitioning a DAG in such problems can achieve various objectives, including minimizing makespan, reducing energy consumption, decreasing communication costs, achieving load balancing, and ensuring fault tolerance. Our model aims to minimize energy consumption. 
In most of the previous works tackling DAG tasks, the cost of data transformation are arc-wisely defined: if a task~$v$ assigned to a machine~$i$ has multiple outneighbors assigned to a different machine~$j$, the output of~$v$ needs to be transferred multiple times from machine~$i$ to machine~$j$, one for each of~$v$'s outneighbors assigned to machine~$j$. 
We build upon previous approaches by considering a more comprehensive energy consumption model for data transformation. We simplify multiple transfers of output data from one machine to another when a task has multiple outneighbors assigned to different machines. This refined modeling approach enables us to evaluate and minimize energy consumption more accurately in DAG-based task allocation.

One noticeable related problem where a similar energy model as ours is adopted is a one studied by Hu~et~al.~\cite{DBLP:conf/infocom/HuBWL19} in 2019. Particularly, this problem takes as the same input as {\probvashort} with~$\ell$ being dropped, and the problem consists in dividing~$V(G)$ into two disjoint sets~$V_1$ and~$V_2$ to minimize $\sum_{v\in V_i, i\in [2]} p(v, i)+\sum_{v\in V_1} q(v)\cdot {\bf{1}}_G(v, V_2)$, under the restriction that there are no arcs from~$V_2$ to~$V_1$. For the problem, Hu~et~al.~\cite{DBLP:conf/infocom/HuBWL19} proposed a polynomial-time algorithm which is, however, pointed out to be flawed by Li~et~al.~\cite{DBLP:journals/tmc/LiLLXJG23}. 

Our studied problem is also related to several specific resource allocation problems. Particularly, when the given DAG does not contain any arc (or $q(v)=0$ for all $v\in V(G)$), {\mprobshort} is equivalent to the problem of maximizing social welfare in multi-agent resource allocation when agents hold $1$-additive utility functions, which is known to be polynomial-time solvable~\cite{DBLP:journals/anor/ChevaleyreEEM08,DBLP:journals/aamas/NguyenNRR14}.\footnote{To see the equivalence, consider each task as a resource, consider each machine as an agent, and consider~$p(v, i)$ as the utility of the resource~$v$ for the agent~$i$.} This special case is also related to a winners allocation problem proposed by Yang~\cite{DBLP:conf/aaai/000121a}, which generalizes the multi-agent resource allocation with two agents, with each holding a $1$-additive utility function.

\section{Problems to Establish Our Results}
Our results are obtained based on the following problems. 

An {\emph{undirected graph}} $G=(V, E)$ is a tuple where~$V$ is a set of vertices and~$E$ is a set of edges.  For simplicity, throughout the remainder of this paper, we will refer to undirected graphs simply as graphs. The set of vertices and the set of edges of~$G$ are also denoted by~$V(G)$ and~$E(G)$, respectively. An {\emph{edge}} between two vertices $v$ and $u$ is denoted by $\edge{v}{u}$. Two vertices are {\it{connected}} in~$G$ if there is path between them.
 For a subset~$E'\subseteq E(G)$, we use $G-E'$ to denote the graph obtained from~$G$ by removing all edges in~$E'$. 
 
For a function $w: S\rightarrow \mathbb{R}$ and a subset $S'\subseteq S$, let $w(S')=\sum_{s\in S'}w(s)$.

\probdef
{Multiway Cut}
{A graph~$G$, a weight function $w: E(G)\rightarrow \mathbb{R}_{\geq 0}$, a set $T=\{t_1, t_2, \dots, t_k\}$ of~$k$ distinct vertices in~$G$ called {\emph{terminals}}, and a number $r$.} 
{Is there a subset~$E'\subseteq E(G)$ such that $w(E')\leq r$, and every two distinct~$t_i, t_j\in T$ are disconnected in $G-E'$?}

Equivalently, {\sc{Multiway Cut}} determines if there is a partition $(V_i)_{i\in [k]}$ of~$V(G)$ such that $t_i\in V_i$ for all $i\in [k]$ and the total weight of edges crossing the partition does not exceed~$r$. 
The {\sc{Multiway Cut}} problem is {\nph} for every $k\geq 3$, but becomes polynomial-time solvable if $k=2$~\cite{DBLP:journals/siamcomp/DahlhausJPSY94}. {\sc{Multiway Cut}} with two terminals is exactly the decision version of the classic problem {\sc{Minimum $s$-$t$-Cut}}. 
The {\sc{Size Bounded Minimum $s$-$t$-Cut}} problem ({\sc{SBM-$s$-$t$-Cut}}) takes as input a graph $G$, a weight function $w: E(G)\rightarrow \mathbb{R}_{\geq 0}$, a pair  $\{s,t\}\subseteq V(G)$ of vertices, and two numbers~$r$ and~$\ell$. The problem asks whether there exists a bipartition $(V_s, V_t)$ of~$V(G)$ such that $\abs{V_s}\leq \ell$, $s\in V_s$, $t\in V_t$, and the total weight of the edges between~$V_s$ and~$V_t$ in~$G$ is at most~$r$. It is known that the {\sc{SBM-$s$-$t$-Cut}} problem is {\nph}~\cite{DBLP:journals/tcs/ChenSSHY16}.

A {\emph{clique}} in a graph is a subset of pairwise adjacent vertices. 

\probdef{Clique}{A graph~$G$ and an integer~$\ell$.}{Does $G$ contain a clique of~$\ell$ vertices?}

{\sc{Clique}} is a well-known {\nph} problem~\cite{DBLP:conf/coco/Karp72}. Moreover, it is {\wah} with respect to~$\ell$ even when restricted to regular graphs~\cite{DBLP:journals/cj/Cai08,DBLP:journals/tcs/Marx06,DBLP:conf/cats/MathiesonS08}.

For a digraph~$G$ and a subset~$A'$ of arcs, $G-A'$ denotes the digraph obtained from~$G$ by removing all arcs in~$A'$. 
For two disjoint subsets $X, Y\subseteq V(G)$, let $\cut{X}{Y}{G}$ be the set of all arcs from~$X$ to~$Y$, i.e., 
$\cut{X}{Y}{G}=\{\arc{v}{u}\in A(G) \setmid v\in X, u\in Y\}$. The decision version of the {\prob{Directed Minimum $s$-$t$ Cut}} problem (DM-$s$-$t$-Cut) is defined as follows. 

\probdef{DM-$s$-$t$-Cut}
{A digraph~$G$, a weight function $w: A(G)\rightarrow \mathbb{R}_{\geq 0}$, a pair $\{s,t\}\subseteq V(G)$ of two vertices, an integer $r$.}
{Is there a bipartition $(V_s, V_t)$ of $V(G)$ such that $s\in V_s$, $t\in V_t$, and $w(\cut{V_s}{V_t}{G})\leq r$?}

Equivalently,  {\sc{DM-$s$-$t$-Cut}} determines if there is a subset of arcs $A'\subseteq A(G)$ such that $w(A')\leq r$ and there is no directed path from~$s$ to~$t$ in $G-A'$. 
We call $w(\cut{V_s}{V_t}{G})$ the {\it{size}} of $\cut{V_s}{V_t}{G}$ with respect to~$w$. 
The {\sc{DM-$s$-$t$-Cut}} problem is polynomial-time solvable~\cite{DBLP:journals/corr/abs-2307-01109}.

\section{Energy-Saving Partition of DAG}
This section presents our results for {\mprobshort}. We first pinpoint the complexity boundary of the problem concerning the number of machines. 

\begin{theorem}
\label{thm-DAG-k-NP-hard}
    {\mprobshort} is {\nph} for every $k\geq 3$.
\end{theorem}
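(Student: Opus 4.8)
The plan is to reduce from {\sc{Multiway Cut}} with $k=3$ terminals, which is {\nph} by Dahlhaus~et~al.~\cite{DBLP:journals/siamcomp/DahlhausJPSY94}. Given a {\sc{Multiway Cut}} instance $(G, w, T=\{t_1,t_2,t_3\}, r)$ with $G$ undirected, the first step is to turn the undirected graph into a DAG while preserving cut structure. A natural device is to replace each edge $\{u,v\}$ by a fresh ``edge vertex'' $e_{uv}$ with two outgoing arcs $(e_{uv}, u)$ and $(e_{uv}, v)$, and set $q(e_{uv}) = w(\{u,v\})$ while $q(v) = 0$ for every original vertex $v$. Orienting all arcs from edge vertices to original vertices makes the digraph acyclic automatically (it is bipartite with all arcs going one way). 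Under an assignment $f: V(G') \rightarrow [3]$, the transfer term $\transt{f}{q}{G'}$ contributed by $e_{uv}$ is $q(e_{uv})$ times the number of machines, other than $f(e_{uv})$, that contain a closed outneighbor of $e_{uv}$; since $e_{uv}$ has exactly the two outneighbors $u$ and $v$, this is $0$ if $f(u)=f(v)=f(e_{uv})$, and otherwise it is $w(\{u,v\})$ if $u,v$ land on a common machine (possibly different from $f(e_{uv})$) and $2w(\{u,v\})$ if $u$ and $v$ are split. So in an optimal solution we should place $e_{uv}$ on whichever machine holds $u$ or $v$, and then the contribution of $e_{uv}$ is exactly $w(\{u,v\})$ if $u$ and $v$ are separated and $0$ otherwise --- this is precisely the multiway cut objective.

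The second step is to force $t_i$ onto machine $i$. I would use the $p$ function for this: set $p(t_i, i) = 0$ and $p(t_i, j) = M$ for $j \neq i$, where $M = w(E(G)) + 1$ is larger than any conceivable transfer cost, and set $p(v, \cdot) = 0$ for all other vertices (original vertices and edge vertices alike). Then in any solution of cost at most $r < M$ each terminal $t_i$ must be on machine $i$. Finally I would set the target in the {\mprobshort} instance to the same $r$. The correctness argument then has two directions: (i) from a multiway cut $E'$ of weight at most $r$, take the induced partition $(V_1, V_2, V_3)$ of the original vertices with $t_i \in V_i$, assign each $e_{uv}$ to the machine of $u$ (say), and verify the total cost equals $w(E') \le r$; (ii) from an assignment $f$ of cost at most $r$, argue $f$ places each $t_i$ on machine $i$, then show we may assume without loss of generality that each $e_{uv}$ is on the same machine as one of its endpoints (moving it there never increases cost), and read off a multiway cut whose weight is at most the transfer cost, hence at most $r$. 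Extending from $k=3$ to arbitrary $k \ge 3$ is routine: either reduce from {\sc{Multiway Cut}} with $k$ terminals directly using the same construction, or pad the $k=3$ construction with $k-3$ dummy terminals $t_4,\dots,t_k$ each pinned to its own machine via the same $p$-penalty trick and adjacent to nothing.

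The main obstacle is the direction (ii) subtlety around edge vertices: because an edge vertex $e_{uv}$ with $f(u) \neq f(v)$ contributes $w(\{u,v\})$ no matter where it sits (as long as it sits on the machine of $u$ or of $v$), the reduction is clean; but one must rule out pathological placements where $e_{uv}$ goes to a fourth machine (contributing $2w(\{u,v\})$) --- these are simply suboptimal, so a minimum-cost solution avoids them, and for the decision version we only need the implication from ``cost $\le r$'' which still forces, by an exchange argument, that we can normalize to a solution of no greater cost in which every $e_{uv}$ sits with an endpoint. I would also need to double check that introducing edge vertices with zero $p$-cost does not create unintended cheap configurations, but since their only effect on the objective is through the transfer term, the accounting is transparent. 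The rest --- acyclicity of $G'$, polynomial size, the arithmetic $M = w(E(G))+1$ --- is immediate.
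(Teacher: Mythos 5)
Your reduction is correct, but it uses a genuinely different gadget from the paper's. The paper first orients the edges of $G$ along an arbitrary linear order to get a DAG, then \emph{subdivides} each arc $\arc{v}{u}$ with a vertex $a(v,u)$ having a unique inneighbor $v$ and a unique outneighbor $u$; it sets $q(v)=+\infty$ on every original vertex so that any finite-cost assignment is forced to co-locate $a(v,u)$ with $v$, after which $a(v,u)$ pays $w(\edge{v}{u})$ exactly when $u$ is separated from $v$. Your construction instead attaches an edge vertex $e_{uv}$ with \emph{two} out-arcs to $u$ and $v$, sets $q=0$ on original vertices (which are sinks anyway), and needs no forcing: since $u$ and $v$ occupy two distinct machines whenever they are separated, at most one of them can share a machine with $e_{uv}$, so $e_{uv}$ contributes at least $w(\edge{u}{v})$ to $\transt{f}{q}{G'}$ \emph{no matter where it is placed}. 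This means your backward direction does not actually need the exchange/normalization argument you sketch --- the inequality ``cut weight $\le$ transfer cost $\le r$'' holds for every assignment of cost at most $r$, which is cleaner than the paper's argument that infinite $q$-values pin down the whole structure of the solution. Your use of a finite big-$M$ penalty (with the implicit normalization $r\le w(E(G))$, which you should state) is also tidier than the paper's $+\infty$ values. One small wording slip: in your first paragraph you say the contribution is $2w(\edge{u}{v})$ ``if $u$ and $v$ are split,'' which is only true when $e_{uv}$ sits on a third machine; you state the correct case analysis later, but the taxonomy in that sentence should be fixed. Both reductions preserve the optimum value exactly, so both also yield the {\apxh} consequence the paper notes after the theorem.
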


\begin{proof}
We prove Theorem~\ref{thm-DAG-k-NP-hard} by a reduction from the {\sc{Multiway Cut}} problem to the {\mprobshort} problem. Let $(G, w, T, r)$ be an instance of the {\sc{Multiway Cut}} problem, where $T=\{t_1, t_2, \dots, t_k\}$. Without loss of generality, we assume that $k\geq 3$. To construct an instance of the {\mprobshort} problem, we first  arbitrarily fix a linear order on~$V(G)$ and orient edges of~$G$ forwardly. Let~$\overrightarrow{G}$ denote the resulting graph, which is clearly a DAG. It holds that $V(G)=V(\overrightarrow{G})$. Then, we construct a digraph~$G'$ obtained from~$\overrightarrow{G}$ by subdividing all arcs: for each arc~$\arc{v}{u}\in A(\overrightarrow{G})$, we introduce one new vertex~$a(v, u)$, add the arcs $(v, a(v, u))$ and $(a(v, u), u)$, and remove the arc~$\arc{v}{u}$. 
Clearly, each newly introduced vertex~$a(v, u)$ has exactly one outneighbor~$u$ and exactly one inneighbor~$v$. Moreover,~$G'$ remains as a DAG. See Figure~\ref{fig-thm-DAG-k-NP-hard} for an illustration. 
\begin{figure}
    \centering
    \includegraphics[width=\textwidth]{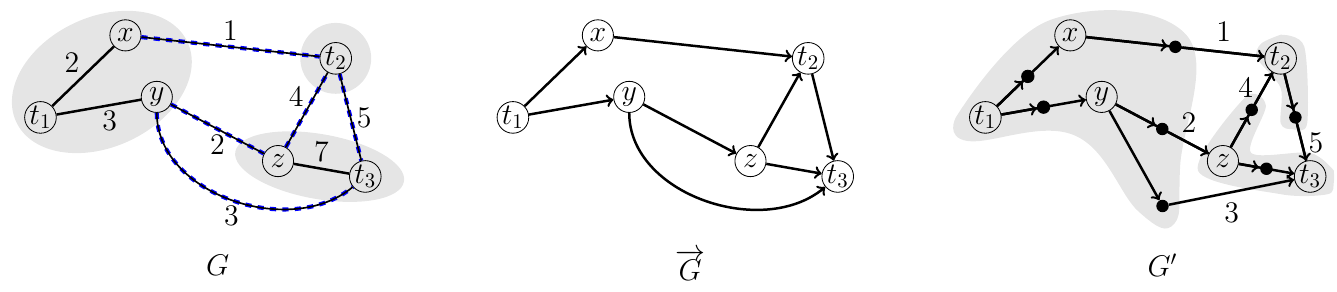}
    \caption{An illustration of the reduction in the proof of Theorem~\ref{thm-DAG-k-NP-hard}. The digraph $\overrightarrow{G}$ is constructed based on the linear ordering $(t_1,x,y,z,t_2,t_3)$. Dark vertices in $G'$ are newly introduced vertices in the reduction.}
    \label{fig-thm-DAG-k-NP-hard}
\end{figure}

Now we construct two functions~$p: V(G')\times [k]\rightarrow \mathbb{R}_{\geq 0}$ and $q: V(G')\rightarrow \mathbb{R}_{\geq 0}$ as follows. 
For each $v\in V(G')$, we define 
\begin{equation}
\label{eq-a}
p(v, i) =
\begin{cases}
+\infty, & \text{if }v=t_j, j\in [k]\setminus \{i\},\\
    0,    & \text{otherwise}. \\
\end{cases}
\end{equation}
For each $v\in V(\overrightarrow{G})$, we define $q(v)=+\infty$, and for each newly introduced vertex $a(v, u)\in V(G')\setminus V(\overrightarrow{G})$ where $\arc{v}{u}\in A(\overrightarrow{G})$, we define $q(a(v, u))=w(\edge{v}{u})$. An instance $(G', p, q,  r)$ of the {\mprobshort} problem is constructed. 

The above reduction clearly runs in polynomial time. 
In the following, we show its correctness. 
For each $v\in V(G)$, let \[B(v)=\{a(v, u) \setmid \arc{v}{u}\in A(\overrightarrow{G})\}.\] For $V'\subseteq V(G)$, let $B(V')=\bigcup_{v\in V'} B(v)$, and let $B[V']=B(V')\cup V'$. 

$(\Rightarrow)$ Assume that the given instance of the {\sc{Multiway Cut}} problem is a {\yesins}; that is, there exists a partition of~$V(G)$ into~$k$ sets~$V_1$,~$V_2$,~$\dots$,~$V_k$ such that $t_i\in V_i$ for all $i\in [k]$, and the total weight of edges crossing the partition is at most~$r$. Note that for every two disjoint $V', V''\subseteq V(G)$, the sets~$B(V')$ and~$B(V'')$ are disjoint. Consequently, $(B[V_i])_{i\in [k]}$ forms a partition of~$V(G')$. Let~$f$ be the assignment function corresponding to this partition, i.e., for every $i\in [k]$ and every $v\in B[V_i]$, we have $f(v)=i$. We claim that~$f$ serves as a {\yes}-witness to the instance of the {\mprobshort} problem constructed above. First, by the definition of $(B[V_i])_{i\in [k]}$ and Equality~\eqref{eq-a}, the following holds: $\sum_{v\in V(G')}p(v, f(v))=0$. Moreover, we have that 
\begin{equation}
\label{eq-b}
\begin{split}
    \transt{f}{q}{G'}& =\sum_{i\in [k]} \sum_{v\in f^{-1}(i)} \sum_{j\in [k]\setminus \{i\}}  q(v) \cdot {\bf{1}}_{G'}(v, f^{-1}(j))\\
    &= \sum_{i\in [k]}\sum_{v\in V_i}\sum_{\substack{a(v, u)\in B(v),\\ j\in [k]\setminus \{i\}, u\in f^{-1}(j)}} q(a(v, u))\\
    &=  \sum_{i\in [k]}\sum_{v\in V_i}\sum_{\substack{\arc{v}{u}\in A(\overrightarrow{G}),\\ j\in [k]\setminus \{i\},  u\in V_j}} q(a(v, u))\\
    & =\sum_{\substack{i,j\in [k], i\neq j,\\ v\in V_i, u\in V_j,\\ \edge{v}{u}\in E(G)}} w(\edge{v}{u})\leq r.\\
\end{split}
\end{equation}
To verify that Equality~\eqref{eq-b} holds, observe that for every $v\in V_i$, where $i\in [k]$, we have $B(v)\subseteq B[V_i]$ and $\outneighbor{G'}{v}=B(v)$. Consequently, only the outputs of the newly introduced vertices $a(v, u)\in V(G')\setminus V(\overrightarrow{G})$ necessitate transfer. 
Additionally, each newly introduced vertex $a(v, u)$ has exactly one outneighbor~$u$. This guarantees the correctness of the transition from the first line to the third line in Equality~\eqref{eq-b}. The transition from the penultimate line to the last line in Equality~\eqref{eq-b} follows from the construction of~$\overrightarrow{G}$ from~$G$ and the definition of the function~$q$.  
Now, we can conclude that the instance of the {\mprobshort} problem is a {\yesins}. 

$(\Leftarrow)$ Assume that the instance of the {\mprobshort} problem is a {\yesins}, i.e., there is an assignment function $f: V(G')\rightarrow [k]$ such that 
\begin{equation}
    \label{eq-c}
\left(\sum_{v\in V(G')} p(v, f(v))\right)+\transt{f}{q}{G'}\leq r.
\end{equation}
By Equality~\eqref{eq-a}, for every terminal~$t_i\in T$, where $i\in [k]$, we have $f(t_i)=i$. Since $p(v, i)=0$ for all~$v \in V(G')\setminus T$ and $i\in [k]$, it follows that $\sum_{v\in V(G')} p(v, f(v))=0$. For each $i\in [k]$, let $V_i=\{v\in V(G) \setmid f(v)=i\}$. We show below that the total weight of edges in~$G$ crossing the partition $(V_i)_{i\in [k]}$ is exactly~$\transt{f}{q}{G'}$, which is at most~$r$. To this end, recall that $q(v)=+\infty$ for every $v\in V(G)$. By Inequality~\eqref{eq-c}, we know that $f(v)=f(v')$ for all  $v\in V(G)$ and $v' \in B(v)$. That is, for every~$v\in V(G)$,~$v$ and all its outneighbors in~$G'$ are assigned the same value by~$f$. To be more precise, for every $i\in [k]$, it holds that $B[V_i]\subseteq f^{-1}(i)$. Since~$B[V_i]$ and~$B[V_j]$ are disjoint whenever~$V_i$ and~$V_j$ are disjoint, this indeed means that $B[V_i] = f^{-1}(i)$ for all $i\in [k]$. Then, by the same reasoning in the proof of the~$(\Rightarrow)$ direction, we infer that Equality~\eqref{eq-b} holds in this direction as well. Therefore, the instance of the {\sc{Multiway Cut}} problem is a {\yesins}. 
\end{proof}

As the optimization version of {\sc{Multiway Cut}} is {\apxh} for every $k\geq 3$~\cite{DBLP:journals/siamcomp/DahlhausJPSY94}, our reduction in the proof of Theorem~\ref{thm-DAG-k-NP-hard} indicates that the optimization version of {\mprobshort} is {\apxh} for every $k\geq 3$.\footnote{The objective of the optimization version of {\mprobshort} is to find an assignment that minimizes the energy consumption.} 
Moreover, as {\sc{Multiway Cut}} remains {\nph} for every $k\geq 3$ even when all edges have the same weight~\cite{DBLP:journals/siamcomp/DahlhausJPSY94}, our reduction also implies that {\mprobshort} remains {\nph} for every $k\geq 3$, even when the two functions~$p$ and~$q$ each has two different values, with one value being identical. Furthermore, astute readers may observe that our reduction can be readily adapted to demonstrate an even more compelling result:

\begin{corollary}
\label{cor-dichotomy-p}
    {\mprobshort} is {\nph} for every $k\geq 3$ even when the two functions~$p$ and~$q$ have overall two different values:~$1$ and~$+\infty$.
\end{corollary}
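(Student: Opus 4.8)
The plan is to revisit the reduction used in the proof of Theorem~\ref{thm-DAG-k-NP-hard} and observe that it already nearly yields the stronger statement, needing only a minor adjustment to the value~$+\infty$ appearing in~$p$ and~$q$ and an appeal to the unit-weight version of {\sc{Multiway Cut}}. Concretely, I would start from an instance $(G, w, T, r)$ of {\sc{Multiway Cut}} in which $w(\edge{v}{u})=1$ for every edge; recall that {\sc{Multiway Cut}} remains {\nph} for every $k\geq 3$ under this restriction~\cite{DBLP:journals/siamcomp/DahlhausJPSY94}. Apply verbatim the construction of~$\overrightarrow{G}$ and~$G'$ (fix a linear order on $V(G)$, orient edges forwardly, subdivide each arc with a vertex $a(v,u)$). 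Now every $q$-value assigned by the reduction is either $+\infty$ (for the original vertices of~$\overrightarrow G$) or $w(\edge{v}{u})=1$ (for the subdivision vertices), and every $p$-value is either $+\infty$ (for a terminal on a wrong machine) or $0$. So the functions produced already take values in $\{0,1,+\infty\}$; the only deviation from the claimed $\{1,+\infty\}$ is the value~$0$.

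The single substantive point is to remove the value~$0$. The role of $p(v,i)=0$ is merely to express "$v$ can go on any machine for free"; the role of $q(a(v,u))=1$ is the actual edge cost, and the role of $q(v)=+\infty$ for original vertices is to force each $v\in V(G)$ onto the same machine as all of $B(v)$. Replacing every~$0$ in~$p$ by~$1$ adds exactly $\abs{V(G')}$ to the objective of every assignment $f$ — this shift is independent of $f$ because $p(v,f(v))$ is now equal to $1$ for all $v\notin T$ and all $f$ (for $v\in T$ it is $1$ on the correct machine and $+\infty$ otherwise, contributing $1$ in any finite-cost assignment). Hence I would output the instance $(G', p', q, r')$ where $p'(v,i)=1$ whenever the original reduction had $p(v,i)=0$ (and $p'(v,i)=+\infty$ otherwise), and $r' = r + \abs{V(G')}$. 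Both directions of the correctness proof of Theorem~\ref{thm-DAG-k-NP-hard} carry over unchanged after subtracting the constant $\abs{V(G')}$ from the inequality, since any assignment of finite total cost must still place each $t_i$ on machine~$i$ and each original vertex on the machine of its $B$-block, and on such assignments $\sum_{v\in V(G')}p'(v,f(v)) = \abs{V(G')}$ exactly, so the residual quantity to bound is again $\transt{f}{q}{G'} = \sum_{\substack{i\neq j,\, v\in V_i,\, u\in V_j,\, \edge{v}{u}\in E(G)}} 1 = \abs{E'}$ for the induced cut~$E'$.

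I do not anticipate a genuine obstacle here — the work is essentially bookkeeping. The only thing to be careful about is the interaction of the value $+\infty$ with the shift: one must confirm that a {\yes}-instance of {\mprobshort} under $p',q$ still forces $f(t_i)=i$ and $B[V_i]=f^{-1}(i)$, which holds because $r'$ is finite, so the objective of the witnessing $f$ is finite, so no $p'(v,f(v))$ or $q(v)\cdot{\bf 1}_{G'}(v,f^{-1}(j))$ term can be $+\infty$; this gives the terminal placement and the block-consistency exactly as before. (If one is squeamish about literal $+\infty$, replace it throughout by a sufficiently large integer, e.g. $1 + r + \abs{V(G')} + \abs{A(\overrightarrow G)}$, which is still a second value alongside~$1$ and keeps the reduction polynomial; the argument is identical.) With this in place the corollary follows immediately from the {\nphns} of unit-weight {\sc{Multiway Cut}} for $k\geq 3$.
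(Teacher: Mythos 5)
Your proposal is correct and is essentially the paper's own argument: start from unit-weight {\sc{Multiway Cut}}, reuse the reduction of Theorem~\ref{thm-DAG-k-NP-hard} verbatim, replace the value~$0$ in Equality~\eqref{eq-a} by~$1$, set $q(a(v,u))=1$ for all subdivision vertices, and shift the budget by the now-constant total $p$-cost of any finite-cost assignment. Your offset $r+\abs{V(G')}$ is the precise bookkeeping here (the paper's sketch writes $r+\abs{V(G)}$, but since every vertex of~$G'$ contributes~$1$ under the modified~$p$, the correct additive constant is indeed $\abs{V(G')}$).
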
 
The proof of the result can be done by replacing the value~$0$ in Equality~\eqref{eq-a} with~$1$, resetting $q(a(v, u))=1$ for all newly introduced vertices $a(v, u)$, and resetting~$r\coloneqq r+\abs{V(G)}$ in the instance of {\mprobshort} constructed in our reduction. 

When~$p$ is a constant function, {\mprobshort} can be solved trivially. Therefore, we have a complexity dichotomy for {\mprobshort} with respect to the number of different values of the function~$p$. Towards a complexity dichotomy concerning the number of machines, we have the following result. 

\begin{theorem}
\label{thm-p-k=2}
    {\mprobshort} with $k=2$ is linear-time reducible to {\sc{DM-$s$-$t$-Cut}}.
\end{theorem}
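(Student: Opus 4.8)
The plan is to recast {\mprobshort} with $k=2$ as an instance of {\sc{DM-$s$-$t$-Cut}} through a classical ``project-selection'' style minimum-cut formulation, in which the two sides of the $s$--$t$ partition play the roles of the two machines. Given an instance $(G,p,q,r)$ of {\mprobshort} with $k=2$, I would build a digraph $G'$ with two new distinguished vertices~$s$ and~$t$, all original vertices of $V(G)$, and, for each non-sink vertex $v\in V(G)$, two fresh auxiliary vertices $a_v$ and $a'_v$. A bipartition $(V_s,V_t)$ of $V(G')$ with $s\in V_s$ and $t\in V_t$ is to be decoded into the assignment $f$ defined by $f(v)=1$ for $v\in V_s\cap V(G)$ and $f(v)=2$ for $v\in V_t\cap V(G)$. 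To encode the execution cost, for every $v\in V(G)$ I would add an arc $\arc{s}{v}$ of weight $p(v,2)$ and an arc $\arc{v}{t}$ of weight $p(v,1)$; every $s$--$t$ partition cuts exactly one of these two arcs, namely the one contributing precisely $p(v,f(v))$, so together these arcs realize the term $\sum_{v\in V(G)}p(v,f(v))$.

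The delicate part is encoding the transfer cost $\transt{f}{q}{G}$, which for $k=2$ equals $\sum_{v\in V(G)}q(v)\cdot{\bf{1}}_G(v,f^{-1}(3-f(v)))$: we must charge $q(v)$ \emph{exactly once} as soon as some outneighbor of~$v$ lies on the machine opposite to $v$'s, and this may occur whether $v$ is on the $s$-side or on the $t$-side. To this end, for each non-sink $v$ I would add an arc $\arc{v}{a_v}$ of weight $q(v)$ together with an arc $\arc{a_v}{u}$ of weight $M$ for every $u\in\outneighbor{G}{v}$, and, symmetrically, an arc $\arc{a'_v}{v}$ of weight $q(v)$ together with an arc $\arc{u}{a'_v}$ of weight $M$ for every $u\in\outneighbor{G}{v}$; here $M:=r+1$, and if $r<0$ I would simply output a fixed no-instance, since the {\mprobshort} objective is always nonnegative. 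The first half of the gadget forces $a_v$ to the $t$-side exactly when $v$ has an outneighbor on the $t$-side (otherwise the weight-$M$ arc $\arc{a_v}{u}$ is cut), and in that case the arc $\arc{v}{a_v}$ is cut iff $v$ is on the $s$-side, so this half contributes $q(v)$ precisely when $f(v)=1$ and $v$ has an outneighbor of $f$-value~$2$. The second half behaves symmetrically through $\arc{a'_v}{v}$ and the arcs $\arc{u}{a'_v}$, contributing $q(v)$ precisely when $f(v)=2$ and $v$ has an outneighbor of $f$-value~$1$. Since $v$ lies on only one side, at most one half ever contributes, and the minimum contribution of the gadget over placements of $a_v,a'_v$ that cut no weight-$M$ arc is exactly $q(v)\cdot{\bf{1}}_G(v,f^{-1}(3-f(v)))$. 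Note that $G'$ will contain directed cycles, for instance $v\to a_v\to u\to a'_v\to v$; this is harmless, since {\sc{DM-$s$-$t$-Cut}} is defined for arbitrary digraphs.

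For correctness I would verify the two directions. For ``$\Rightarrow$'', given an $f$ of {\mprobshort}-cost at most~$r$, put $v\in V_s$ iff $f(v)=1$, put $a_v\in V_t$ iff $v$ has an outneighbor of $f$-value~$2$, and put $a'_v\in V_s$ iff $v$ has an outneighbor of $f$-value~$1$; a short case analysis shows that no weight-$M$ arc is cut and that the cut weight equals exactly $\sum_{v\in V(G)}p(v,f(v))+\transt{f}{q}{G}\le r$. For ``$\Leftarrow$'', given a cut $(V_s,V_t)$ of weight at most~$r$, no weight-$M$ arc is cut because $M=r+1>r$; decoding $f$ as above, the arcs incident with~$s$ or~$t$ contribute exactly $\sum_{v\in V(G)}p(v,f(v))$, and for every $v$ with ${\bf{1}}_G(v,f^{-1}(3-f(v)))=1$ the absence of a cut weight-$M$ arc forces $a_v\in V_t$ when $f(v)=1$, respectively $a'_v\in V_s$ when $f(v)=2$, so the cut contains the weight-$q(v)$ arc $\arc{v}{a_v}$ or $\arc{a'_v}{v}$; these arcs are pairwise distinct over $v$ and distinct from the arcs incident with~$s$ or~$t$, hence the cut weight is at least $\sum_{v\in V(G)}p(v,f(v))+\transt{f}{q}{G}$, which is therefore at most~$r$. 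Since $G'$ has $O(\abs{V(G)})$ vertices and $O(\abs{V(G)}+\abs{A(G)})$ arcs and each weight is a single lookup, the reduction runs in linear time; if the integrality of the target threshold is insisted upon, one first clears denominators of all rational inputs, which preserves correctness and stays within linear time.

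I expect the transfer-cost gadget to be the main obstacle: the cost is of the form ``pay $q(v)$ once when the out-neighbourhood of $v$ is split across the two machines'', so it must be charged neither zero nor twice and must be triggered in \emph{both} crossing directions even though $G$ is acyclic; a single auxiliary vertex cannot capture this, and the correctness of the two-vertex $a_v/a'_v$ construction rests on choosing $M$ large enough that the auxiliary vertices are forced onto the side that avoids the weight-$M$ arcs, after which the weight-$q(v)$ arcs fire exactly on the intended configurations.
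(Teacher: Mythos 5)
Your construction is essentially identical to the paper's: your $a_v$ and $a'_v$ are the paper's $v^+$ and $v^-$, with the same weight-$q(v)$ arcs $\arc{v}{a_v}$, $\arc{a'_v}{v}$, the same heavy arcs to/from the outneighbors (you use $M=r+1$ where the paper uses $+\infty$, a harmless variation), and the same $p(v,2)$/$p(v,1)$ arcs from $s$ and to $t$; your correctness argument matches the paper's as well, with your backward direction being, if anything, slightly more direct than the paper's minimality-based case analysis. The proposal is correct.
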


\begin{proof}
We derive a linear-time reduction from {\mprobshort} with $k=2$ to {\sc{DM-$s$-$t$-Cut}} as follows. 
Let $(G, p, q, r)$ be an instance of {\mprobshort} where~$G$ is a DAG, and $p: V(G)\times [2]\rightarrow \mathbb{R}_{\geq 0}$ and $q: V(G)\rightarrow \mathbb{R}_{\geq 0}$ are two functions.  We first construct a digraph~$G'$ obtained from~$G$ by performing the following operations:
    \begin{enumerate}
        \item \label{op-1}For each nonsink $v\in V(G)$, we perform the following operations: 
        \begin{enumerate}
            \item \label{op-a} create two vertices~$v^+$ and~$v^-$;
            \item \label{op-b} for every outneighbor~$u$ of~$v$ in~$G$, add the arcs~${\arc{v^+}{u}}$ and~$\arc{u}{v^-}$, and let the weights of both arcs be~$+\infty$;
            \item \label{op-c} add the arcs~$\arc{v}{v^+}$ and~$\arc{v^-}{v}$, and let the weights of both arcs be~$q(v)$;
            \item \label{op-d} remove all arcs from~$v$ to all its outneighbors in~$G$. Therefore,~$V(G)$ forms an independent set in~$G'$. 
        \end{enumerate}
         \item \label{op-2} Create two vertices~$s$ and~$t$, add arcs from~$s$ to all vertices in~$V(G)$, and add arcs from all vertices in~$V(G)$ to~$t$. For each $v\in V(G)$, let the weight of the arc~$\arc{s}{v}$ be~$p(v, 2)$, and let that of~$\arc{v}{t}$ be~$p(v, 1)$. 
    \end{enumerate}
We refer to Figure~\ref{fig-thm-p-k=2} for an illustration of the construction of~$G'$.
\begin{figure}[h]
    \centering
    \includegraphics[width=0.75\textwidth]{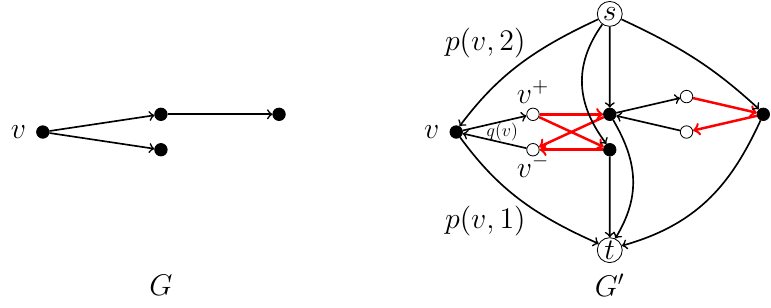}
    \caption{Construction of the digraph~$G'$ from the DAG~$G$ as described in the proof of Theorem~\ref{thm-p-k=2}. Red arcs have positive infinite weights.}
    \label{fig-thm-p-k=2}
\end{figure}
Let $w: A(G')\rightarrow \mathbb{R}_{\geq 0}$ be the function such that for each arc $e\in A(G')$ it holds that~$w(e)$ equals the weight of~$e$ defined above. The instance of {\sc{DM-$s$-$t$-Cut}} is $(G', w, \{s, t\}, r)$. 
The reduction clearly can be carried out in linear time. 
It remains to prove its correctness. 

$(\Rightarrow)$ Assume that there are two disjoint $V_1, V_2\subseteq V(G)$ such that 
\begin{equation}
\label{eq-thm-p-k=2-a}
    \left(\sum_{v\in V_i, i\in [2]} p(v, i)\right)+\btranst{(V_1, V_2)}{q}{G}\leq r.
\end{equation} 
Below we construct a partition $(X, Y)$ of $V(G')$ such that $s\in X$, $t\in Y$, and the total weight of edges crossing $(X, Y)$ is at most~$r$. To achieve this, we define the following sets:
\begin{itemize}
    \item For each $i\in [2]$, let $V_i^i=\{v\in V(G) \setmid \outneighborc{G}{v}\subseteq V_i\}$ be the set of vertices~$v\in V(G)$ such that~$v$ and all its outneighbors in~$G$ are in the same set~$V_i$. (Notice that each isolated vertex of~$G$ is either in~$V_1^1$ or in $V_2^2$.)
    \item For each $i\in [2]$, let $V_i^{i\star}=\{v^+ \setmid v\in V_i^i, \outneighbor{G}{v}\neq\emptyset\} \cup \{v^- \setmid v\in V_i^i, \outneighbor{G}{v}\neq\emptyset\}$ be the set of vertices constructed for vertices from~$V_i^i$ by Operation~\eqref{op-a}. 
    \item For each $i, j\in [2]$ such that $i\neq j$, let $V_i^j=\{v\in V_i \setmid \outneighbor{G}{v}\cap V_j\neq\emptyset\}$ be the set of vertices from~$V_i$ having at least one outneighbor from~$V_{j}$ in~$G$. 
    \item For each $i, j\in [2]$ such that $i\neq j$, let $V_i^{j+}=\{v^+ \setmid v\in V_i^j\}$.
    \item For each $i, j\in [2]$ such that $i\neq j$, let $V_i^{j-}=\{v^- \setmid v\in V_i^j\}$.
\end{itemize}
The above defined ten sets are pairwise disjoint, and their union is exactly $V(G')\setminus \{s, t\}$. Noticeably, $V_1=V_1^1\cup V_1^2$ and $V_2=V_2^2\cup V_2^1$. 
Let \[X=V_1^1\cup V_1^{1\star}\cup V_1^2\cup V_1^{2-}\cup V_2^{1-}\cup \{s\},\] 
and let 
    \[Y=V_2^2\cup V_2^{2\star}\cup V_2^1\cup V_1^{2+}\cup V_2^{1+}\cup\{t\}.\] 
It is not difficult to verify that $X\cap Y=\emptyset$ and $X\cup Y=V(G')$. Moreover, it holds that $V_1\subseteq X$ and $V_2\subseteq Y$. 
To conclude this part of the proof, we need to demonstrate that the total weight of edges crossing the bipartition $(X, Y)$ in $G'$ is at most $r$. To achieve this, we identify the edges crossing this bipartition, relying on Claims~\ref{claim-aa}--\ref{claim-fa} presented below.

\begin{claim}
\label{claim-aa}
    None of~$V_1^1\cup V_1^{1\star}\cup V_1^{2-}$ has any outneighbors from~$Y\setminus \{t\}$ in~$G'$. 
\end{claim}

\begin{proof}
   We will prove the claim for the sets~$V_1^1$,~$V_1^{1\star}$,~$V_1^{2-}$ separately. 
   
   Consider first the set $V_1^1$. Let~$v$ be an arbitrary vertex from~$V_1^1$, if any. Apart from~$t$, the outneighbors of $v$ in $G'$ consist of the newly introduced vertex~$v^+$ for~$v$ (if~$v$ is not a sink in $G$), and the newly introduced vertices~$u^-$ for inneighbors~$u$ of~$v$ in~$G$ (if~$v$ is not a source of $G$). Obviously, if~$v^+$ exists, it belongs to the set $V_1^{1\star}$, which is disjoint from~$Y$. Additionally, any~$u^-$ where $\arc{u}{v}\in A(G)$ cannot belong to $V_2^2\cup V_2^1\cup V_1^{2+}\cup V_2^{1+}$. Furthermore,~$u^-$ cannot be in $V_2^{2\star}$ because this would imply $v\in V_2$ contradicting the assumption that $v\in V_1^1\subseteq V_1$. Thus,~$v$ has no outneighbor from $Y\setminus \{t\}$ in~$G'$. Since~$v$ was chosen arbitrarily from $V_1^1$, we conclude that no vertex in~$V_1^1$ has any outneighbors from~$Y\setminus \{t\}$ in~$G'$. 

Next, consider the vertices in the set $V_1^{1\star}$. For every vertex $v^-\in V_1^{1\star}$  where $v\in V_1^1$, the only outneighbor of~$v^{-}$ in~$G'$ is~$v$, which is clearly not in~$Y$.  
For a vertex $v^+\in V_1^{1\star}$  where $v\in V_1^1$, the outneighbors of~$v^+$ in~$G'$ are exactly the outneighbors of~$v$ in the digraph~$G$, all of which belong to~$V_1^1$. As~$V_1^1\subseteq X$, none of the outneighbors of~$v^+$ is from~$Y$.

Finally, consider the set $V_1^{2-}$. By definition,  each vertex $v^-\in V_1^{2-}$ has exactly one outneighbor~$v$ in~$G'$, and~$v$ is from~$V_1^2$. Clearly,~$V_1^2$ and~$Y$ are disjoint.  
\end{proof}

\begin{claim}
\label{claim-ab}
    All arcs from $V_1^2$ to~$Y\setminus \{t\}$ in $G'$ are contained in $\cut{V_1^2}{V_1^{2+}}{G'}$.
\end{claim}

\begin{proof}
The claim is vacuously true if $V_1^2=\emptyset$. Otherwise, let~$v$ be an arbitrary vertex from $V_1^2$. The outneighbors of~$v$ in~$G'$ are from~$\{v^+,t\}\cup \{u^- \setmid \arc{u}{v}\in A(G)\}$. By the definition of~$V_1^2$, it follows that~$v^+$, if exists (when~$v$ is not a sink in~$G$), is contained in $V_1^{2+}$. Now suppose~$v$ has an outneighbor~$u^-$ in~$G'$ such that~$\arc{u}{v}\in A(G)$.  In this case, $u^-$ cannot be from $Y\setminus V_2^{2\star}=V_2^2\cup V_2^1\cup V_1^{2+}\cup V_2^{1+}\cup\{t\}$. If $u^-\in V_2^{2\star}$, then~$u$ must be from~$V_2^2$, which implies that $v\in V_2$, contradicting the fact that $v\in V_1^2\subseteq V_1$. Therefore,~$v$ can only have~$v^+$ and~$t$ as outneighbors in~$G'$. As $v^+\in V_1^{2+}$, the claim holds.
\end{proof} 

\begin{claim}
\label{claim-ac}
    None of~$V_2^2\cup V_2^{2\star}\cup V_2^{1+}$ has any inneighbors from~$X\setminus \{s\}$ in~$G'$.
\end{claim}

\begin{proof}
    We will prove the claim for the sets $V_2^2$, $V_2^{2\star}$, and $V_2^{1+}$ separately. 
    
    Consider first the set $V_2^2$. Let~$v$ be an arbitrary vertex in~$V_2^2$, if any. We consider first the case where~$v$ is neither a sink nor a source of~$G$. By the construction of~$G'$, it holds that $\inneighbor{G'}{v}\setminus \{s\}=\{v^-\}\cup \{u^+ \setmid u\in \inneighbor{G}{v}\}$. Clearly, $v^-\in V_2^{2\star}$. Let~$u$ be a vertex in~$\inneighbor{G}{v}$. If $u\in X$, then~$u$ belongs to~$V_1^2$ because~$u$ has an outneighbor~$v\in V_2^2\subseteq V_2$. It follows that $u^+\in V_1^{2+}$, which is contained in~$Y$. If $u\in Y$, then either~$u^+\in V_2^{2\star}$ (when $u\in V_2^2$) or $u^+\in V_2^{1+}$ (when $u\in V_2^1$). In both cases, we have that~$u^+\in Y$. We can conclude now that none of the inneighbors of~$v$ in~$G'$ is from $X\setminus \{s\}$. In the case where~$v$ is a sink but not a source of~$G$, we have $\inneighbor{G'}{v}\setminus \{s\}=\{u^+ \setmid u\in \inneighbor{G}{v}\}$. If~$v$ is a source but not a sink of~$G$, then $\inneighbor{G'}{v}\setminus \{s\}=\{v^-\}$. If~$v$ is an isolated vertex in~$G$, we have $\inneighbor{G'}{v}\setminus \{s\}=\emptyset$. The proofs for these cases follow the same reasoning as above.

Consider now  a vertex $v^+\in V_2^{2\star}$, where $v\in V_2^2$. By the construction of~$G'$, we know that~$v$ is the only inneighbor of~$v^+$ in~$G'$. For a vertex $v^-\in V_2^{2\star}$, where $v\in V_2^2$, we have $\inneighbor{G'}{v^-}=\outneighbor{G}{v}$. As $v\in V_2^2$, we have $\outneighbor{G}{v}\subseteq V_2=V_2^1\cup V_2^2$. Since $V_2^1\cup V_2^2$ and~$X$ are disjoint, we can conclude that none of the inneighbors of any vertex from~$V_2^{2\star}$ belong to~$X\setminus \{s\}$.

Finally, we prove the claim for~$V_2^{1+}$. Let~$v^+$ be a vertex from~$V_2^{1+}$. By the construction of~$G'$,~$v$ is the only inneighbor of~$v^+$. Obviously, $v\in V_2^1$, which is disjoint from~$X$. 
\end{proof} 

\begin{claim}
\label{claim-ad}
    All arcs from~$V_2^{1-}$ to~$Y$ in~$G'$ are contained in $\cut{V_2^{1-}}{V_2^{1}}{G'}$.
\end{claim}

\begin{proof}
    By the construction of~$G'$, there are no arcs from~$V_2^{1-}$ to $V_1^{2+}\cup \{t\}$ in~$G'$. By Claim~\ref{claim-ac}, there are no arcs from $V_2^{1-}$ to $V_2^2\cup V_2^{2\star}\cup V_2^{1+}$ in~$G'$. This leaves only the possibility that all arcs from~$V_2^{1-}$ to~$Y$ in~$G'$ are contained in $\cut{V_2^{1-}}{V_2^{1}}{G'}$. 
\end{proof}  

The following claim is a consequence of Claims~\ref{claim-aa}--\ref{claim-ad}.

\begin{claim}
\label{claim-fa}
$\cut{X}{Y}{G'}$ is the union of the following pairwise disjoint sets: 
\begin{itemize}
    \item $\cut{\{s\}}{Y}{G'}$, i.e., the set of all arcs from~$\{s\}$ to~$Y$ in the digraph~$G'$. 
    \item $\cut{X}{\{t\}}{G'}$, i.e., the set of all arcs from~$X$ to~$\{t\}$ in the digraph~$G'$. 
    \item $\cut{V_1^2}{V_1^{2+}}{G'}$, i.e., the set of all arcs from $V_1^2$ to $V_1^{2+}$ in the digraph~$G'$.
    \item $\cut{V_2^{1-}}{V_2^{1}}{G'}$, i.e., the set of all arcs from $V_2^{1-}$ to $V_2^1$ in the digraph~$G'$.
\end{itemize}
\end{claim}

By Operation~\eqref{op-2} in the construction of~$G'$, there is an arc from~$s$ to every vertex in $V_2^1\cup V_2^2$, and there is no arc from~$s$ to any vertices from~$Y\setminus (V_2^1\cup V_2^2)$. That is, $\cut{\{s\}}{Y}{G'}$ consists of exactly the arcs from~$s$ to all vertices of~$V_2$ in~$G'$. 
Then, by Operation~\eqref{op-c} in the construction of~$G'$, we have 
\[w(\cut{\{s\}}{Y}{G'})=\sum_{v\in V_2^1\cup V_2^2}w(\arc{s}{v})=\sum_{v\in V_2^1\cup V_2^2}p(v, 2).\]
Analogously, we know that $w(\cut{X}{\{t\}}{G'})=\sum_{v\in V_1^1\cup V_1^2}p(v, 1)$. 
From $V_1=V_1^1\cup V_1^2$ and $V_2=V_2^1\cup V_2^2$, we obtain 
\begin{equation}
\label{eq-z}
w(\cut{\{s\}}{Y}{G'})+w(\cut{X}{\{t\}}{G'})=\sum_{v\in V_i, i\in [2]} p(v, i).
\end{equation}
Now we analyze  $w(\cut{V_1^2}{V_1^{2+}}{G'})$. As for every $v^+\in V(G')$ where~$v\in V(G)$,~$v$ is the only inneighbor of~$v^+$ in~$G'$, we have that \[\cut{V_1^2}{V_1^{2+}}{G'}=\{\arc{v}{v^+} \setmid v\in V_1^2\}.\] 
It follows that 
\[w(\cut{V_1^2}{V_1^{2+}}{G'})=\sum_{v\in V_1^2} w(\arc{v}{v^+})=\sum_{v\in V_1^2} q(v).\]
Analogously, we can obtain that $w(\cut{V_2^{1-}}{V_2^{1}}{G'})=\sum_{v\in V_2^1} q(v)$. 

By Claim~\ref{claim-aa}, and the facts that $V_1=V_1^1\cup V_1^2$,  $V_2\subseteq Y$, and every vertex from~$V_1^2$ has at least one outneighbor from~$V_2$ in~$G$, we have that $\sum_{v\in V_1} q(v) \cdot {\bf{1}}_{G}(v, V_2)=\sum_{v\in V_1^2} q(v)$. Similarly, it holds that $\sum_{v\in V_2} q(v) \cdot {\bf{1}}_{G}(v, V_1)=\sum_{v\in V_2^1} q(v)$. 
We arrive at 
\begin{equation}
\label{eq-zz}
w(\cut{V_1^2}{V_1^{2+}}{G'})+w(\cut{V_2^{1-}}{V_2^{1}}{G'})=\btranst{(V_1, V_2)}{q}{G}.
\end{equation}
From Claim~\ref{claim-fa}, and Equalities~\eqref{eq-z} and~\eqref{eq-zz}, we have that 
\[w(\cut{X}{Y}{G'})=\left(\sum_{v\in V_i, i\in [2]} p(v, i)\right)+\btranst{(V_1, V_2)}{q}{G},\] 
which is at most~$r$ by Inequality~\eqref{eq-thm-p-k=2-a}. Therefore, the instance of the {\sc{DM-$s$-$t$-Cut}} problem is a {\yesins}.  

$(\Leftarrow)$ Assume that there is a bipartition $(X, Y)$ of~$V(G')$ such that $s\in X$, $t\in Y$, and $w(\cut{X}{Y}{G'})\leq r$. Without loss of generality, we  assume that~$w(\cut{X}{Y}{G'})$ is minimized among all bipartitions of~$V(G')$. We show that $\cut{X}{Y}{G'}$ possesses several structural properties which help us construct a desired assignment of~$V(G)$. 

The first property, as formally articulated in the ensuing claim, affirms that when a vertex~$v$ in~$G$ is included in the same part of $(X, Y)$ alongside all its outneighbors in~$G$, retaining both newly introduced vertices for $v$ within the same part as $v$ emerges as an optimal strategy.

\begin{claim}
\label{claim-a}
    Let $v\in V(G)$ such that $\outneighbor{G}{v}\neq\emptyset$ and all outneighbors of~$v$ in~$G$ are in the same $Z\in \{X, Y\}$. Let $\overline{Z}=V(G')\setminus Z$. Let $X'=Z\cup \{v^+, v^{-}\}$ and let $Y'=\overline{Z}\setminus \{v^+, v^{-}\}$. Then, it holds that $w(\cut{X'}{Y'}{G'})\leq w(\cut{X}{Y}{G'})$ if $s\in X'$, and  $w(\cut{Y'}{X'}{G'})\leq w(\cut{X}{Y}{G'})$ if $s\in Y'$. 
\end{claim}

\begin{proof}
    Let~$v$ be as stipulated in Claim~\ref{claim-a}. We consider first the case where $\outneighborc{G}{v}\subseteq X$ (i.e., $Z=X$). In this case, $X'=X\cup \{v^+, v^-\}$ and $s\in X'$. We need to prove that $w(\cut{X'}{Y'}{G'})\leq w(\cut{X}{Y}{G'})$. To this end, observe that  none of the arcs of $\{\arc{u}{v^-} \setmid u\in \outneighbor{G}{v}\}$ is in~$\cut{X}{Y}{G'}$, due to the infinite weight of these arcs (see~\eqref{op-b}). It follows that~$v^-\in X$. If $v^+\in X$, then $(X',Y')=(X,Y)$, and thus $w(\cut{X'}{Y'}{G'})= w(\cut{X}{Y}{G'})$; we are done. If $v^+\in Y$, then since $\inneighbor{G'}{v^+}=\{v\}$, we have that $w(\cut{X}{Y}{G'})-w(\cut{X'}{Y'}{G'})=w(\arc{v}{v^+})\geq 0$. 

We consider now the case where $\outneighborc{G}{v}\subseteq Y$ (i.e., $Z=Y$). In this case, $X'=Y\cup \{v^+, v^{-}\}$, $Y'=X\setminus \{v^+, v^-\}$, and $s\in Y'$. Observe that  none of the arcs in $\{\arc{v^+}{u} \setmid u\in \outneighbor{G}{v}\}$ belongs to~$\cut{X}{Y}{G'}$, due to their infinite weight (see~\eqref{op-b}). It follows that $v^+\in Y$. If $v^-\in Y$, then $(Y',X')=(X,Y)$, and thus $w(\cut{Y'}{X'}{G'})= w(\cut{X}{Y}{G'})$; we are done. If $v^-\in X$, then since $\outneighbor{G'}{v^-}=\{v\}$, we have that $w(\cut{X}{Y}{G'})-w(\cut{Y'}{X'}{G'})=w(\arc{v^-}{v})\geq 0$. 
\end{proof} 

By Claim~\ref{claim-a}, we may assume that, for every $v\in V(G)$ such that $\outneighbor{G}{v}\neq\emptyset$, if~$\outneighborc{G}{v}$ are contained in the same $Z\in\{X, Y\}$, then $\{v^+, v^-\}$ are also contained in~$Z$. 

The subsequent property essentially posits that for any vertex~$v\in V(G)$ placed in part~$X$, if at least one of~$v$'s outneighbors in~$G$ is placed in the opposite part~$Y$, then the arc $\arc{v}{v^+}$, whose weight equals the amount of energy consumption for transferring the output of~$v$, must be in $\cut{X}{Y}{G'}$.

\begin{claim}
    \label{obs-b}
    Let $v\in X\cap V(G)$ such that $\outneighbor{G}{v}\cap Y \neq \emptyset$. Then, $\arc{v}{v^+}\in \cut{X}{Y}{G'}$ and none of the arcs entering or leaving~$v^-$ is contained in~$\cut{X}{Y}{G'}$.
\end{claim}

\begin{proof}
    Let~$v$ be as stipulated in the claim. Let~$u$ be an arbitrary vertex from~$\outneighbor{G}{v}\cap Y$. It must hold that $v^+\in Y$, because if not, the arc~$\arc{v^+}{u}$, which has an infinite weight, would be included in~$\cut{X}{Y}{G'}$, contradicting with $w(\cut{X}{Y}{G'})\leq r$. Since~$v$ is the only outneighbor of~$v^-$ in~$G'$ and~$v\in X$, the only arc~$\arc{v^-}{v}$ leaving~$v^-$ is excluded from~$\cut{X}{Y}{G'}$. Finally, recall that $\inneighbor{G'}{v^-}=\outneighbor{G}{v}$, and by the definition of the function~$w$, every arc $\arc{u}{v^-}$ where $u\in \outneighbor{G}{v}$ has an infinite weight. Since $w(\cut{X}{Y}{G'})\leq r$, none of the arcs entering~$v^-$ can be in~$\cut{X}{Y}{G'}$. 
\end{proof} 

By Claim~\ref{obs-b}, for every $v\in X\cap V(G)$ having at least one outneighbor in~$G$ that is put in~$Y$, we may assume that~$v^+\in Y$. 
Moreover, as~$v$ is the only outneighbor of~$v^-$ in~$G'$, we may assume that~$v^-\in X$.

Applying analogous reasoning, we derive the following assertion.

\begin{claim}
\label{obs-c}
Let $v\in Y\cap V(G)$ such that $\outneighbor{G}{v}\cap X \neq\emptyset$. Then, $\arc{v^-}{v}\in \cut{X}{Y}{G'}$, and none of the arcs entering or leaving~$v^+$ is contained in~$\cut{X}{Y}{G'}$.
\end{claim}

For every $v\in Y\cap V(G)$ as in Claim~\ref{obs-c}, we may assume that $v^-\in X$ and $v^+\in Y$. 

Let $V_1=V(G)\cap X$ and let $V_2=V(G)\cap Y$. We show below that the assignment corresponding to $(V_1, V_2)$ is a {\yes}-witness of the instance of {\mprobshort}. 
By the definition of the weight function~$w$, we have that 
\begin{equation}
\label{eq-x}
    \sum_{v\in V_1} p(v, 1)+\sum_{v\in V_2}p(v, 2) = \sum_{v\in V_1} w(\arc{v}{t})+\sum_{v\in V_2} w(\arc{s}{v}).
\end{equation}
Let $V_1'=\{v\in V_1 \setmid \outneighbor{G}{v}\cap Y\neq\emptyset\}$ 
and let~$V_2=\{v\in V_2\setmid \outneighbor{G}{v}\cap X\neq\emptyset\}$. By Claims~\ref{claim-a}--\ref{obs-c}, we have: 
\begin{itemize}
    \item For every $i\in [2]$ and every $v\in V_i\setminus V_i'$, $\{v^+,v^-\}$ are contained in the same part of $(X, Y)$ as~$v$, whenever~$v$ is not a sink in~$G$.
    \item For every $v\in V_1'\cup V_2'$, $v^+\in Y$ and $v^-\in X$.
\end{itemize}
The following equalities follow:
\begin{equation}
\label{eq-xx}
   \sum_{v\in V_1} q(v)\cdot {\bf{1}}_G(v, V_2)=\sum_{v\in V_1'} w(\arc{v}{v^+}),
\end{equation}
\begin{equation}
\label{eq-xxx}
   \sum_{v\in V_2} q(v) \cdot {\bf{1}}_G(v, V_1)=\sum_{v\in V_2'} w(\arc{v^-}{v}).
\end{equation}
The sum of the left sides of Equalities~\eqref{eq-x}--\eqref{eq-xxx} is $\left(\sum_{v\in V_i, i\in [2]} p(v, i)\right)+\btranst{(V_1, V_2 )}{q}{G}$, and the sum of the right sides of them is $w(\cut{X}{Y}{G'})$. From $w(\cut{X}{Y}{G'})\leq r$, we know that the instance of the {\mprobshort} problem is a {\yesins}. 
\end{proof}

As the {\sc{DM-$s$-$t$-Cut}} problem can be solved in time $\bigo{n\cdot m \cdot \log (n^2/m)}$~\cite{HAO1994424}, by Theorem~\ref{thm-p-k=2}, we reach the following result.

\begin{corollary}
{\mprobshort} with $k=2$ can be solved in time $\bigo{n\cdot m \cdot \log (n^2/m)}$, where~$n$ and~$m$ are respectively the number of vertices and the number of edges of the input digraph.
\end{corollary}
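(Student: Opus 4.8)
The plan is to chain the linear-time reduction of Theorem~\ref{thm-p-k=2} with the {\sc{DM-$s$-$t$-Cut}} algorithm of~\cite{HAO1994424}, the only real work being to check that the cut instance produced by the reduction has $\bigo{n}$ vertices and $\bigo{m}$ arcs. First I would dispose of isolated vertices of the input DAG~$G$: such a vertex~$v$ is a sink and is an outneighbor of no vertex, so it never contributes to $\transt{f}{q}{G}$ and an optimal assignment places it on $\arg\min_{i\in[2]} p(v,i)$; deleting all isolated vertices and decreasing~$r$ accordingly gives an equivalent instance in which every remaining vertex has a neighbor, hence $n = \bigo{m}$ (and if no vertex remains, the instance is trivial). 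So from now on one may assume $m = \Omega(n)$.

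Next I would apply the reduction of Theorem~\ref{thm-p-k=2} to obtain the {\sc{DM-$s$-$t$-Cut}} instance $(G', w, \{s,t\}, r)$. Reading off the construction: $G'$ has at most $3n+2$ vertices (the vertices of~$G$, at most two copies $v^+, v^-$ per non-sink, and $s,t$) and at most $2m + 4n$ arcs (two arcs per arc of~$G$ from Operation~\eqref{op-b}, two per non-sink from Operation~\eqref{op-c}, and the $2n$ arcs incident to~$s$ or~$t$ from Operation~\eqref{op-2}). Thus $n' \coloneqq \abs{V(G')} = \bigo{n}$ and $m' \coloneqq \abs{A(G')} = \bigo{n+m} = \bigo{m}$. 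The weight~$+\infty$ appearing in~$w$ is replaced by any finite value exceeding both~$r$ and the sum of all finite arc weights, which leaves unchanged the set of cuts of weight at most~$r$; the result is a bona fide {\sc{DM-$s$-$t$-Cut}} instance with a nonnegative rational weight function.

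Finally I would run the algorithm of~\cite{HAO1994424} on $(G', w, \{s,t\}, r)$; by Theorem~\ref{thm-p-k=2} its answer equals that of the {\mprobshort} instance, and substituting $n' = \Theta(n)$ and $m' = \Theta(m)$ into its running time $\bigo{n'\cdot m'\cdot\log(n'^2/m')}$ gives $\bigo{n\cdot m\cdot\log(n^2/m)}$. Since the preprocessing and the reduction each take $\bigo{n+m}$ time, the cut computation dominates, which yields the claimed bound. There is no substantive difficulty here; the only points deserving a line of care are the trivial treatment of isolated vertices (so that the bound can be phrased with~$m$ rather than $n+m$) and the standard replacement of the infinite weights by a sufficiently large finite value so that the cited algorithm applies verbatim.
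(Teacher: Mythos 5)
Your proposal is correct and follows the same route as the paper: the corollary is obtained by composing the linear-time reduction of Theorem~\ref{thm-p-k=2} with the cited $\bigo{n\cdot m\cdot \log(n^2/m)}$ algorithm for {\sc{DM-$s$-$t$-Cut}}. The paper states this in one sentence without elaboration; your extra bookkeeping (verifying the reduced instance has $\bigo{n}$ vertices and $\bigo{m}$ arcs, handling isolated vertices, and replacing the infinite weights by a sufficiently large finite value) only makes explicit details the paper leaves implicit.
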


Next, we derive a polynomial-time algorithm for a special case of {\mprobshort} where the input DAG is a directed path. Tasks having such precedence dependencies are relevant to many applications (see, e.g.,~\cite{DBLP:conf/europar/AbaZM17,DBLP:journals/ijfcs/JansenS10}). 

\begin{theorem}
\label{thm-P-path}
   The {\mprobshort} problem can be solved in time $\bigo{n\cdot k^3 \cdot M}$ when the given DAG is a directed path with~$n$ vertices, where~$M$ represents the number of bits to encode the largest value of the functions~$p$ and~$q$. 
\end{theorem}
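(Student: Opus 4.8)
The plan is to solve the path case by dynamic programming along the directed path $v_1 \to v_2 \to \dots \to v_n$. The key observation is that on a directed path each vertex $v_j$ has at most one outneighbor, namely $v_{j+1}$, so the transfer cost $\transt{f}{q}{G}$ decomposes edge-wise: it equals $\sum_{j=1}^{n-1} q(v_j)\cdot \mathbf{1}[f(v_j)\neq f(v_{j+1})]$ (here $\mathbf{1}$ denotes the indicator), since ${\bf 1}_G(v_j, f^{-1}(i)) = 1$ for exactly the one machine $i = f(v_{j+1})$ when $f(v_{j+1}) \neq f(v_j)$, and the inner double sum over $j \in [k]\setminus\{i\}$ collapses to a single term. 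Thus the total cost is $\sum_{j=1}^n p(v_j, f(v_j)) + \sum_{j=1}^{n-1} q(v_j)\cdot \mathbf{1}[f(v_j)\neq f(v_{j+1})]$, which is a classic ``sequence labeling'' objective.

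Next I would set up the DP table. For each position $j \in [n]$ and each machine $a \in [k]$, let $D[j,a]$ be the minimum of $\sum_{t=1}^{j} p(v_t, f(v_t)) + \sum_{t=1}^{j-1} q(v_t)\cdot\mathbf{1}[f(v_t)\neq f(v_{t+1})]$ over all assignments $f$ restricted to $v_1,\dots,v_j$ with $f(v_j) = a$. The base case is $D[1,a] = p(v_1, a)$ for each $a \in [k]$. The transition is
\[
D[j{+}1, b] = p(v_{j+1}, b) + \min_{a\in[k]} \bigl( D[j,a] + q(v_j)\cdot \mathbf{1}[a\neq b] \bigr).
\]
The answer to the decision problem is whether $\min_{a\in[k]} D[n,a] \le r$. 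Correctness follows by a straightforward induction on $j$ using the edge-wise decomposition above; I would state this as a lemma and give the one-line inductive argument.

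For the running time, a naive evaluation of each transition costs $O(k)$ per pair $(j{+}1, b)$, giving $O(n k^2)$ arithmetic operations overall, and each operation is on numbers of magnitude bounded by roughly $n$ times the largest value of $p$ and $q$, i.e., of bit-length $O(M + \log n)$; treating $M$ as dominating the encoding size gives $O(n k^2 M)$, which already beats the claimed $O(n k^3 M)$. To match (or explain) the stated bound I would simply note the slightly coarser accounting: one can also speed the transition up — for fixed $j$, precompute $\mu_j = \min_{a} D[j,a]$ once, and then $\min_a (D[j,a] + q(v_j)\mathbf{1}[a\neq b])$ equals $\min(D[j,b],\ \mu_j + q(v_j))$ when the overall minimizer is $b$, or $\mu_j + q(v_j)$ otherwise — but this optimization is not needed for the claimed bound, so I would keep the clean $O(k)$-per-entry version and absorb the discrepancy into the arithmetic-cost bookkeeping.

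I do not anticipate a genuine obstacle here; the only mild subtlety is handling the $+\infty$ values (or, more precisely, the fact that the paper allows real-valued $p, q$ but the complexity claim is in terms of bit-length $M$) — I would either restrict to rational inputs of bit-length $M$, or remark that $+\infty$ entries are handled symbolically, and confirm that all DP quantities stay within $O(M + \log n)$ bits so the arithmetic is polynomial. The edge-wise decomposition of $\transt{f}{q}{G}$ on a path is the one place worth stating carefully, since it is what makes the first-order-Markov DP valid; everything after that is routine.
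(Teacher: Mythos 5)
Your proposal is correct and follows essentially the same approach as the paper: the same dynamic program over positions and machines, with the same base case and a transition that is just a restatement of the paper's recursion $S(i,j)=p(v_i,j)+\min\{S(i-1,j),\min_{j'\neq j}S(i-1,j')+q(v_{i-1})\}$. Your observation that the count of arithmetic operations is really $\bigo{n\cdot k^2}$ (so the paper's $\bigo{n\cdot k^3\cdot M}$ is a coarser upper bound) is accurate but does not change anything.
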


\begin{proof}
Let $I=(G, p, q, r)$ be an instance of the {\mprobshort} problem, where~$G$ is a directed path, $p: V(G)\times [k]\rightarrow \mathbb{R}_{\geq 0}$ and $q: V(G)\rightarrow \mathbb{R}_{\geq 0}$ are two functions, and~$r$ is a number. Let~$M$ be the number of bits to encode the largest value of the functions~$p$ and~$q$. We derive a dynamic programming algorithm to solve the problem as follows. 

    Let $(v_1, v_2, \dots, v_n)$ be the directed path representing~$G$. 
    We maintain a table $S(i, j)$, where $i\in [n]$ and $j\in [k]$. Specifically,~$S(i, j)$ is defined as the value of an optimal solution to the instance~$I$ restricted to the first~$i$ vertices, under the condition that~$v_i$ is assigned to machine~$j$:
    \[S(i,j)=p(v_i, j)+\min_{\substack{f: \{v_1, \dots, v_{i}\}\rightarrow [k],\\ f(v_i)=j}} \left\{\sum_{x\in [i-1]} p(v_{x}, f(v_x))+\sum_{\substack{x\in [i-1]~{\text{such that}}\\ f(v_x)\neq f(v_{x+1})}} q(v_x)\right\}.\]
    By definition, for each $j\in [k]$, we have $S(1,j)=p(v_1, j)$. 
    We use the following recursion to update the table:
    \[S(i,j)=p(v_i, j)+\min\left\{S(i-1, j), \min_{j'\in [k]\setminus \{j\}}S(i-1, j')+q(v_{i-1})\right\}.\]
    After computing the table, we conclude that~$I$ is a {\yesins} if and only if \[\min_{j\in [k]}S(n,j)\leq r.\] 

As there are at most $n\cdot k$ entries in the table, and computing each entry requires checking up to~$k$ other entries and performing addition and comparison operations~$\bigo{k}$ times on numbers that can be encoded within~$M$ bits, the entire table can be computed in time $\bigo{n\cdot k^3\cdot M}$.
\end{proof}

\section{Sized-Bounded Energy-Saving Partition of DAG with Two Machines}
 
Now we switch our focus to the natural variant {\probvashort}, where there are only two machines and one of them is capable of executing only a limited number~$\ell$ of tasks. 
  
Recall that the {\sc{SBM-$s$-$t$-Cut}} problem is exactly {\sc{Multiway Cut}} where $k=2$ with an additional restriction that one of the parts in the desired bipartition contains at most~$\ell$ vertices. 
However, despite the {\nphns} the {\sc{SBM-$s$-$t$-Cut}} problem, the reduction in the proof of Theorem~\ref{thm-DAG-k-NP-hard} is insufficient to show the {\nphns} of the {\probvashort} problem. The reason is that, according to the reduction, for a bipartiton $(V_s, V_t)$ of the graph~$G$ in an instance of the  {\sc{SBM-$s$-$t$-Cut}} problem, in the corresponding bipartition $(V_1, V_2)$ of the created DAG $G'$, $V_1$ needs to contain $V_s$ and all newly introduced vertices for vertices in $V_s$. However, the number of these newly introduced vertices cannot be bounded in an ``exact'' manner, which hinders the setup of the value of~$\ell$. We thus need to modify the reduction in order to pinpoint a countable bound of $\abs{V_1}$ and thus the value of $\ell$. To this end, through a reduction from the {\sc{Clique}} problem restricted to regular graphs, we  show that the {\sc{SBM-$s$-$t$-Cut}} problem is {\nph} even in a special case where the degrees of the vertices in the input graph is nearly regular. As a matter of fact, we show that the problem is {\wah} with respect to~$\ell$. Despite the widespread attention this problem has received, it is remarkable that such a {\wahns} result has not been previously documented. Therefore, our reduction bolsters the {\nphns} of {\sc{SBM-$s$-$t$-Cut}} as examined in \cite{DBLP:journals/tcs/ChenSSHY16}. We believe this {\wahns} result is of independent interest, and hence explitecty state it in the following theroem.

A {\it{parameterized problem}} is a subset of $\Sigma^*\times \mathbb{N}$, where~$\Sigma$ is a finite alphabet. A parameterized problem can be either {\it{fixed-parameter tractable}} ({\fpt}) or {\it{fixed-parameter intractable}}.
In particular, a parameterized problem is {\fpt} if there is an algorithm which correctly determines for each instance $(I, \kappa)$ of the problem whether 
$(I, \kappa)$ is a {\yesins} in time $\bigo{f(\kappa)\cdot \abs{I}^{\bigo{1}}}$, where~$f$ is a computable function and~$\abs{I}$ is the size of~$I$.
{\wah} problems are considered intractable, in the sense that unless {\fpt}$ = ${\wa} (which is widely believed to be unlikely), they do not admit any {\fpt}-algorithms. For greater details on parameterized complexity theory, we refer to~\cite{DBLP:books/sp/CyganFKLMPPS15,DBLP:series/txcs/DowneyF13,DBLP:journals/siamcomp/DowneyF95}. 

\begin{theorem}
\label{thm-sb-minimum-cut-wah}
The {\sc{SBM-$s$-$t$-Cut}} problem is {\wah} with respect to~$\ell$. Furthermore, this holds even when the edges of the given digraph have at most two distinct weights, and all vertices, except~$s$ and~$t$, have the same degree. 
\end{theorem}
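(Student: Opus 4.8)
The plan is to give a parameterized reduction from {\sc{Clique}} restricted to regular graphs — which is {\wah} with respect to the solution size — to {\sc{SBM-$s$-$t$-Cut}} parameterized by~$\ell$. Let $(H, k)$ be an instance of {\sc{Clique}} in which~$H$ is $d$-regular with vertex set of size~$n$. I would construct a weighted graph~$G$ as follows: take~$H$ and give every edge of~$H$ weight~$1$; add two new vertices~$s$ and~$t$; for every $v\in V(H)$ add an edge $\edge{s}{v}$ of weight~$d+2$ and an edge $\edge{t}{v}$ of weight~$1$. Finally set $\ell\coloneqq k+1$ and $r\coloneqq n(d+2)-k^2$. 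Note that every vertex of~$H$ has degree exactly $d+2$ in~$G$, the only vertices with a different degree are~$s$ and~$t$, and only the two weights~$1$ and~$d+2$ occur; moreover the construction is polynomial-time and~$\ell$ is a function of~$k$, so this is a valid parameterized reduction provided correctness holds.

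For correctness, take any bipartition $(V_s, V_t)$ of~$V(G)$ with $s\in V_s$ and $t\in V_t$, and put $S\coloneqq V_s\cap V(H)$; then $V_s=\{s\}\cup S$ and $\abs S=\abs{V_s}-1\le \ell-1=k$. The edges crossing $(V_s, V_t)$ are precisely the edges $\edge{s}{v}$ with $v\notin S$, the edges $\edge{t}{v}$ with $v\in S$, and the edges of~$H$ with exactly one endpoint in~$S$; using $d$-regularity, the number of edges of the last type is $d\abs S-2e_H(S)$, where $e_H(S)$ denotes the number of edges of~$H$ inside~$S$. Hence the total weight of the crossing edges equals $(n-\abs S)(d+2)+\abs S+\bigl(d\abs S-2e_H(S)\bigr)=n(d+2)-\bigl(\abs S+2e_H(S)\bigr)$. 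Since $2e_H(S)\le \abs S(\abs S-1)$ and $\abs S\le k$, we get $\abs S+2e_H(S)\le \abs S^{2}\le k^{2}$, with equality if and only if $\abs S=k$ and~$S$ induces a clique of~$H$. Therefore the minimum possible weight of a crossing set over admissible bipartitions is $n(d+2)-k^{2}=r$, and it is attained if and only if~$H$ has a clique of size~$k$; this establishes both directions simultaneously, so {\sc{SBM-$s$-$t$-Cut}} is {\wah} with respect to~$\ell$ under the claimed restrictions on weights and degrees.

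The one point that needs care is ruling out that a \emph{smaller} set~$S$ yields a cheaper bounded-size cut, which would sever the link between the cut value and the clique structure. This is exactly what the weight $d+2$ on the $s$-incident edges handles: it is calibrated so that in the closed form $n(d+2)-(\abs S+2e_H(S))$ the quantities~$\abs S$ and~$e_H(S)$ both appear with a negative coefficient, forcing any minimizer to spend the full budget $\abs S=k$ and then to pack the maximum number $\binom{k}{2}$ of~$H$-edges into~$S$. Once this calibration is in place, the remainder is routine degree counting, so I do not anticipate any further obstacle.
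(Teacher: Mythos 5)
Your construction is identical to the paper's (same graph $H$ plus $s,t$ adjacent to all of $V(H)$, the same weights $d+2$ and $1$, the same budget $r=n(d+2)-k^2$ and size bound $k+1$), and your correctness argument is a correct, slightly streamlined version of the paper's: where the paper first argues via an exchange step that one may assume $\abs{V_s}=\ell+1$ and then counts edges, you fold both steps into the single closed form $n(d+2)-\bigl(\abs{S}+2e_H(S)\bigr)$ and optimize it directly. This is essentially the same approach, and the proof is correct.
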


\begin{proof}
    Let $(G, \ell)$ be an instance of the {\sc{Clique}} problem, where~$G$ is a regular graph with~$n$ vertices. Let~$d$ denote the degree of the vertices in~$G$. 
    We assume that $d>\ell$, as otherwise the problem becomes trivial to solve. 
    Let~$G'$ be the graph obtained from~$G$ by adding two new vertices,~$s$ and~$t$, both of which are adjacent to all vertices of~$G$. Define a weight function $w: E(G')\rightarrow \mathbb{R}_{\geq 0}$ such that, for every edge $\edge{x}{y}\in E(G')$, the following holds:  
    \begin{equation*}
    w(\edge{x}{y})=
    \begin{cases}
        d+2, & \text{if }s\in \{x,y\},\\
        1,& \text{otherwise}.
    \end{cases}   
    \end{equation*} 
Let $r=n\cdot (d+2)-\ell^2$. Let $I=(G', w, \{s, t\}, r, \ell+1)$, which represents an instance of the {\sc{SBM-$s$-$t$-Cut}} problem. Notice that both~$s$ and~$t$ have the same degree $n$, and every other vertex has a degree of $d+1$ in~$G'$. 
We prove the correctness of the reduction as follows.
   
   $(\Rightarrow)$ Assume that~$G$ contains a clique~$K\subseteq V(G)$ of~$\ell$ vertices. Let $V_s=K\cup \{s\}$ and let $V_t=\{t\}\cup V(G)\setminus K$. Obviously, $(V_s, V_t)$ forms a bipartition of~$V(G')$. By the definition of the function~$w$, the total weight of edges crossing $(V_s, V_t)$ is $(n-\ell)\cdot (d+2)+\ell\cdot (d-\ell+1)+\ell=n\cdot (d+2)-\ell^2=r$. Therefore, $(V_s,V_t)$ is a {\yes}-witness for the instance~$I$ of the {\sc{SBM-$s$-$t$-Cut}} problem.

   $(\Leftarrow)$ Assume that~$I$ is a {\yesins} of the {\sc{SBM-$s$-$t$-Cut}} problem, i.e., there exists a bipartition $(V_s, V_t)$ of~$V(G')$ such that $s\in V_s$, $\abs{V_s}\leq \ell+1$, $t\in V_t$, and the total weight of the edges between~$V_s$ and~$V_t$ is at most~$r$ with respect to the weight function~$w$. Let~$v$ be an arbitrary vertex in $V_t\setminus \{t\}$, if any. Observe that as the weight of the edge between~$s$ and~$v$ is $d+2$, and each of the other $d+1$ edges incident to~$v$ in~$G'$ has a weight of~$1$, moving~$v$ from~$V_t$ to~$V_s$ strictly reduces the total weight of the edges between~$V_s$ and~$V_t$. Based on this observation, we may assume that $\abs{V_s}=\ell+1$. Let $K=V_s\setminus \{s\}$, so $\abs{K}=\ell$. The total weight of the edges between~$\{s\}$ and $V(G)\setminus K$ is $(n-\ell)\cdot (d+2)$. The total weight of the edges between~$\{t\}$ and~$K$ is~$\ell$. This implies that the total weight of the edges between~$K$ and $V(G)\setminus K$ is at most $r-(n-\ell)\cdot (d+2)-\ell=\ell\cdot (d-\ell+1)$. Since each of these edges has a weight of~$1$, there are at most $\ell\cdot (d-\ell+1)$ edges between~$K$ and $V(G)\setminus K$ in~$G$. Given that~$G$ is $d$-regular, it follows that there are at least $\frac{\abs{K}\cdot d-\ell\cdot (d-\ell+1)}{2}=\frac{\ell\cdot (\ell-1)}{2}$ edges within~$K$. This is possible only if~$K$ is a clique in~$G$. Therefore, we can conclude that $(G, \ell)$ is a {\yesins} of the {\sc{Clique}} problem.
\end{proof}

We are ready to show the {\nphns} of the {\probvashort} problem. 

\begin{theorem}
\label{thm-bounded-wah}
The {\probvashort} problem is {\nph}. Furthermore, this holds even when the function~$p$ has two distinct values and the function $q$ has three distinct values.
\end{theorem}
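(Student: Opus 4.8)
The plan is to reduce from {\sc{Clique}} restricted to regular graphs, which is {\wah} (hence {\nph}) with respect to the clique size, essentially by combining the reduction behind Theorem~\ref{thm-sb-minimum-cut-wah} with two extra gadgets: subdividing edges, so that the per-edge weight of a cut is mimicked by the per-vertex transfer cost of {\probvashort}; and padding each original vertex with private sink vertices, so that $\babs{V_1}$ becomes a function of the number of original vertices assigned to the first machine — which is exactly what turns a clique-size bound into the size bound $\ell$.

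In detail, let $(G,\ell)$ be an instance of {\sc{Clique}} with $G$ a $d$-regular graph on $n$ vertices; we may assume $d\geq\ell$. Fix a linear order on $V(G)$ and orient every edge forwardly to obtain a DAG $\overrightarrow{G}$. Build a DAG $H$: add a source $s$ and a sink $t$; for every $v\in V(G)$ add the arcs $\arc{s}{v}$ and $\arc{v}{t}$; subdivide every arc of the current digraph, writing $a_{xy}$ for the vertex that subdivides $\arc{x}{y}$; and, for every $v\in V(G)$, append $d-\babs{\outneighbor{\overrightarrow{G}}{v}}$ fresh sink vertices as outneighbors of $v$, so that every $v\in V(G)$ has exactly $d+1$ outneighbors in $H$. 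Subdivisions and appended sinks preserve acyclicity, so $H$ is a DAG. Put $q(v)=+\infty$ for $v\in V(G)\cup\{s,t\}$, $q(a_{sv})=d+2$ for $v\in V(G)$, and $q(u)=1$ for every other vertex $u$ of $H$; put $p(s,2)=p(t,1)=+\infty$ and $p$ equal to $0$ everywhere else. (The value $+\infty$ may be replaced by any integer exceeding $r$.) Finally set $\ell'=1+n+(d+2)\ell$ and $r=n\cdot(d+2)-\ell^2$, and output $(H,p,q,r,\ell')$. The reduction is polynomial, $p$ uses the two distinct values $0$ and $+\infty$, and $q$ uses the three distinct values $1$, $d+2$, and $+\infty$, as required.

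For correctness, one first shows that every solution $(V_1,V_2)$ of cost at most $r$ is \emph{clean}: finiteness of the cost forces $s\in V_1$ and $t\in V_2$, and since $q(v)=+\infty$ for every $v\in V(G)\cup\{s\}$ it forces every outneighbor in $H$ of $s$ and of every vertex of $V_1\cap V(G)$ into $V_1$, and every outneighbor of every vertex of $V_2\cap V(G)$ into $V_2$. Writing $K=V_1\cap V(G)$, and recalling that $s$ has exactly the $n$ outneighbors $\{a_{sv}\setmid v\in V(G)\}$ while each vertex of $V(G)$ has exactly $d+1$ private outneighbors in $H$, cleanness forces $V_1$ to be the disjoint union of $\{s\}$, $\{a_{sv}\setmid v\in V(G)\}$, $K$, and the private outneighbors of all vertices of $K$; hence $\babs{V_1}=1+n+(d+2)\babs{K}$, so $\babs{V_1}\leq\ell'$ is equivalent to $\babs{K}\leq\ell$. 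A short enumeration of which arcs of $H$ cross $(V_1,V_2)$ — namely the arcs out of the $a_{sv}$ with $v\notin K$, the arcs into $t$ out of the $a_{vt}$ with $v\in K$, and the arcs out of the subdivision vertices of the edges of $G$ cut by $(K,V(G)\setminus K)$ — yields that the total cost equals
\[(n-\babs{K})(d+2)+\babs{K}+\bigl(d\babs{K}-2e(K)\bigr)=n(d+2)-\babs{K}-2e(K),\]
where $e(K)$ is the number of edges of $G$ with both ends in $K$ (the $p$-part is $0$ for clean solutions). Thus, if $K$ is a clique of size $\ell$ in $G$, the associated clean solution has $\babs{V_1}=\ell'$ and cost $n(d+2)-\ell-\ell(\ell-1)=n(d+2)-\ell^2=r$; conversely, a clean solution of cost at most $r$ satisfies $\babs{K}\leq\ell$ and $\babs{K}+2e(K)\geq\ell^2$, and since $\babs{K}+2e(K)\leq\babs{K}+\babs{K}(\babs{K}-1)=\babs{K}^2\leq\ell^2$ always holds, both inequalities are tight, forcing $\babs{K}=\ell$ and $e(K)=\binom{\ell}{2}$, i.e., $K$ is a clique of size $\ell$ in $G$.

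The step I expect to be the crux is the padding: in the plain reduction of Theorem~\ref{thm-DAG-k-NP-hard} the number of subdivision vertices dragged onto the first machine alongside $K$ depends on the out-degrees induced by the chosen orientation and hence is not determined by $\babs{K}$, which is precisely why that reduction cannot pin down the value of $\ell$; the private sink vertices repair this by equalizing every original vertex's contribution to $d+1$. The remaining parts — the cleanness argument, the exhaustive (but short) case analysis behind the cost identity, and checking that replacing $+\infty$ by a sufficiently large finite value is harmless — are routine.
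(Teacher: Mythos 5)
Your proof is correct, and at its core it rests on the same two ideas as the paper's: subdividing arcs so that the edge weights of a cut are mimicked by per-vertex transfer costs, and padding each original vertex with private sink vertices so that the number of auxiliary vertices dragged into $V_1$ becomes a fixed linear function of the number of original vertices placed there. The organizational difference is that the paper factors the argument through {\sc{SBM-$s$-$t$-Cut}}: it first establishes Theorem~\ref{thm-sb-minimum-cut-wah} (producing {\sc{SBM-$s$-$t$-Cut}} instances in which every non-terminal vertex has the same degree) and then pads with sets $J(v)$ of size $\babs{\inneighbor{\overrightarrow{G}}{v}}$, so that each original vertex contributes exactly its degree plus one to its side; you instead inline the two reductions into a single reduction from {\sc{Clique}} on regular graphs, padding to a uniform $d+1$ outneighbors per original vertex. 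Your version buys a few things. First, $\babs{V_1}$ is pinned down \emph{exactly} as $1+n+(d+2)\babs{K}$, which makes the equivalence between $\babs{V_1}\leq \ell'$ and $\babs{K}\leq \ell$ immediate; the paper only states the inequality $\abs{B[V_s]}\leq \ell'$ and, in its backward direction, never explicitly extracts $\abs{V_s}\leq\ell$ from the size bound, a step your exact count supplies for free. Second, the exchange argument of Theorem~\ref{thm-sb-minimum-cut-wah} is replaced by the cleaner algebraic chain $\babs{K}+2e(K)\leq\babs{K}^2\leq\ell^2$, which forces both tightness of the size bound and that $K$ is a clique in one stroke. Third, the count of distinct $q$-values ($1$, $d+2$, $+\infty$) is transparent, whereas the paper's accounting (which additionally assigns $q=0$ to the padding vertices) is slightly loose. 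The cleanness argument, the crossing-arc enumeration behind your cost identity, and the verification that every non-original vertex is the private outneighbor of exactly one vertex of $V(G)\cup\{s\}$ all check out.
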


\begin{proof}
    We prove the theorem by adapting the reduction presented in the proof of Theorem~\ref{thm-DAG-k-NP-hard}. Let $I=(G, w, \{s, t\}, r, \ell)$ be an instance of the {\sc{SBM-$s$-$t$-Cut}} problem. Note that, without $\ell$,~$I$ is an instance of the {\sc{Multiway Cut}} problem, where~$s$ and~$t$ are two terminals. Following the proof of Theorem~\ref{thm-sb-minimum-cut-wah}, we assume that every vertex except~$s$ and~$t$ has the same degree~$d$ in $G$, and that either of~$s$ and~$t$ is adjacent to all vertices from $V(G)\setminus \{s,t\}$. Let $n=\abs{V(G)\setminus \{s,t\}}$, so both~$s$ and~$t$ have degree~$n$ in~$G$. 
    
    We follow the same procedure as in the proof of Theorem~\ref{thm-DAG-k-NP-hard} to construct a DAG $\overrightarrow{G}$ from~$G$, with the additional restriction that~$s$ and~$t$ are the first and last vertices, respectively, in the ordering of~$V(G)$ used to guide the construction of $\overrightarrow{G}$. Consequently,~$s$ and~$t$ are the source and the sink of~$\overrightarrow{G}$, respectively. We then use the same procedure as in the proof of Theorem~\ref{thm-DAG-k-NP-hard}  to construct a DAG~$G'$ from~$\overrightarrow{G}$. Next, we obtain a DAG~$G''$ from~$G'$ by adding the following vertices and arcs for each vertex $v\in V(G)\setminus \{s,t\}$:
    \begin{enumerate}
        \item[(1)] Create a set $J(v)$ of~$i$ new vertices, where~$i$ is the number of inneighbors of~$v$ in~$\overrightarrow{G}$. 
        \item[(2)] Add arcs from~$v$ to all vertices in~$J(v)$.
    \end{enumerate} 
    We refer to Figure~\ref{fig-thm-bounded-wah} for an illustration.

    \begin{figure}
        \centering
        \includegraphics[width=\textwidth]{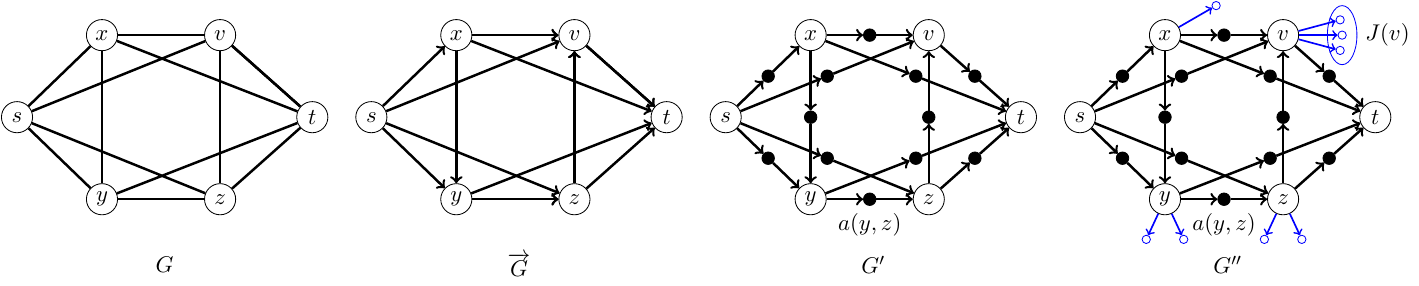}
        \caption{An illustration of the reduction in the proof of Theorem~\ref{thm-bounded-wah}. The DAG $\overrightarrow{G}$ is constructed based on the ordering $(s,x,y,z,v,t)$ on $V(G)$.}
        \label{fig-thm-bounded-wah}
    \end{figure}
    
    We now define the functions~$p$ and~$q$. The values of these functions for vertices from~$V(G')$ remain the same as in the proof of Theorem~\ref{thm-DAG-k-NP-hard}, where~$s$ is treated as~$t_1$ and~$t$ as~$t_2$. 
    Therefore, transferring the outputs of vertices in $V(\overrightarrow{G})$, including~$s$ and~$t$, requires an infinite amount of energy. Specifically, computing 
$s$ on machine 2 and $t$ on machine 1 also consumes infinite energy (i.e., $p(s,2)=p(t,1)=+\infty$). However, for any other computation of tasks on either machine, the energy consumption is zero. Additionally, for every arc $\arc{v}{u}$ in $\overrightarrow{G}$, transferring the output of the corresponding vertex $a(v,u)$, created during the reduction, consumes 
$w(\{v,u\})$ units of energy.   

    For all vertices $a\in V(G'')\setminus V(G')$, we set $q(a)=p(a,1)=p(a,2)=0$. 
    Let $\ell'=\ell+n+d\cdot (\ell-1)$. Define $g(I)=(G'', p, q, r, \ell')$, which represents an instance of the {\probvashort} problem. The correctness of the reduction is proved as follow.
    
    
    $(\Rightarrow)$ Assume that there is a bipartition $(V_s,V_t)$ of $V(G)$ such that $s\in V_s$, $\abs{V_s}\leq \ell$, $t\in V_t$, and the total weight of edges crossing $(V_s, V_t)$ is at most $r$ with respect to $w$. 
    Let 
    \[B[V_s]=V_s\cup \{a(v, u) \setmid v\in V_s, \arc{v}{u}\in A(\overrightarrow{G}\}\cup \left(\bigcup_{v\in V_s\setminus \{s\}}J(v)\right)\]
    be the set that consists of $V_s$ and all vertices created for vertices in $V_s$ during the reduction.   
    Similarly, define
    \[B[V_t]=V_t\cup \{a(v, u) \setmid v\in V_t, \arc{v}{u}\in A(\overrightarrow{G}\}\cup \left(\bigcup_{v\in V_t\setminus \{t\}}J(v)\right),\]
    which is the set consisting of $V_t$ and all vertices created for vertices in $V_t$ in the reduction. 
    Following the reasoning principle used in the proof of Theorem~\ref{thm-DAG-k-NP-hard}, it can be verify that $(B[V_s], B[V_t])$ is a partition of $V(G'')$, and the total energy consumption corresponding to this partition is at most~$r$ with respect to the functions~$p$ and~$q$. Now we analyze the size of ${B[V_s]}$. By the construction of~$G''$, we have that 
    \[\babs{\{a(v, u) \setmid v\in V_s, \arc{v}{u}\in A(\overrightarrow{G}\}}=\sum_{v\in V_s} \babs{\outneighbor{v}{\overrightarrow{G}}},\] 
    and 
    \[\babs{\bigcup_{v\in V_s\setminus \{s\}}J(v)}=\sum_{v\in V_s\setminus \{s\}} \babs{\inneighbor{v}{\overrightarrow{G}}}.\]
    
    It follows that 
   \[\abs{B[V_s]}= \abs{V_s}+\abs{\outneighbor{s}{\overrightarrow{G}}}+\sum_{v\in V_s\setminus \{s\}}(\abs{\inneighbor{v}{\overrightarrow{G}}}+\abs{\outneighbor{v}{\overrightarrow{G}}}) \leq \ell+n+d\cdot (\ell-1)=\ell'\]
We can now conclude that $(B[V_s], B[V_t])$ is {\yes}-witness for the instance~$g(I)$ of the {{\probvashort}} problem. 
    
    $(\Leftarrow)$ Assume that there is a bipartition $(V_1, V_2)$ of $V(G'')$ such that  
    \[\sum_{v\in V_i}p(v,i)+\btranst{(V_1, V_2)}{q}{G''}\leq r.\] 
    Recall that the value of the function $p$ is either zero or $+\infty$. Hence, it holds that $\btranst{(V_1, V_2)}{q}{G''}\leq r$. 
    In addition, due to the infinity of $p(s,2)$ and $p(t,1)$, we know that $s\in V_1$ and $t\in V_2$. Moreover, as for every vertex $v\in V(G)$ it holds that $q(v)=\infty$, $v$  are in the same part of $(V_1, V_2)$ with all its outneighbors in $G''$ (i.e., vertices from $\{a(v, u) \setmid v\in V_s, \arc{v}{u}\in A(\overrightarrow{G}\}\cup J(v)$. Let $V_s=V_1\cap V(G)$, and let $V_t=V_2\cap V(G)$. We have that $s\in V_s$ and $t\in V_t$. Let $\{v,u\}$ be any arbitrary edge in~$G$ between~$V_1$ and~$V_2$. Without loss of generality, assume that $v\in V_1$ and $u\in V_2$. Then, either $\arc{v}{u}$ or $\arc{u}{v}$ is an arc in $\overrightarrow{G}$. In the former case, there is a vertex $a(v, u)\in V_1$ and an arc from $a(v, u)$  to~$u$ in $G''$. In the latter case there is a vertex $a(u,v)\in V_2$ and an arc from $a(u, v)$ to $v$ in $G''$. Importantly, we have that $q(a(v, u))=w(\{v,u\})$ in the former case, and have $q(a(u, v))=w(\{v,u\})$ in the latter case. Recall now that the transfer of the output of all nonsink vertices in $G''$ except those in $\{a(v,u)\setmid \arc{v}{u}\in A(\overrightarrow{G})\}$ requires infinite energy. As a result, 
    \begin{equation*}
    \begin{split}
         \btranst{(V_1, V_2)}{q}{G''}= & \sum_{v\in V_1, u\in V_2,\arc{v}{u}\in A(\overrightarrow{G})}q(a(v,u))+\sum_{v\in V_1, u\in V_2,\arc{u}{v}\in A(\overrightarrow{G})}q(a(u,v))\\
                                     = & \sum_{v\in V_1, u\in V_2, \edge{v}{u}\in E(G)}w(\{v,u\})\leq r. \\
    \end{split}   
    \end{equation*}
    We can conclude now that the instance of the {\sc{SBM-$s$-$t$-Cut}} problem is a {\yesins}.

    To see that the  {\nphns} of the  {\sc{SBM-$s$-$t$-Cut}} holds even in the special case stipulated in the theorem, note that in the reduction, the function~$p$ takes only two distinct values. Furthermore, based on the reduction above and Theorem~\ref{thm-sb-minimum-cut-wah}, the reduction remains valid even when the values of the $q$ function for all vertices in $G'$ are limited to two possibilities. Combined with the zero value of the function for vertices in $V(G'')\setminus V(G')$, the function $q$ can have at most three distinct values.
\end{proof}

\section{Conclusion}
We have studied the complexity of the {\mprobshort} problem and its sized-bounded variant {\probvashort}. Our study offers several dichotomy results on the complexity of the {\mprobshort} problem, concerning the number~$k$ of machines and the number of different values of the function~$p$. However, an intriguing open question persists: can {\mprobshort} be solved in polynomial time when the function~$q$ is constant? To gain deeper insight into this problem, consider a more restricted case where $k=3$ and $q$ is constantly equal to $1$. This special case of the problem seeks to partition vertices in a graph into three parts, with the goal of minimizing the corresponding energy consumption for completing the tasks plus the number of vertices that have at least one outneighbor in a different part. In addition to this open problem, the unresolved inquiry into whether {\probvashort} retains its status as {\nph} when~$q$ exhibits only two distinct values adds another layer of complexity to this fascinating problem.

\section*{Acknowledgements}
The authors would like to express their gratitude to the three anonymous reviewers of Theoretical Computer Science for their constructive feedback and insightful suggestions, which greatly contributed to the improvement of this paper


\begin{thebibliography}{99}\itemsep=-5pt

\bibitem{DBLP:conf/europar/AbaZM17}
Massinissa~Ait Aba, Lilia Zaourar, and Alix Munier.
\newblock Approximation algorithm for scheduling a chain of tasks on
  heterogeneous systems.
\newblock In {\em Euro-Par}, pages 353--365, 2017.

\bibitem{DBLP:journals/siamdm/AlonM11}
Noga Alon and D{\'{a}}niel Marx.
\newblock Sparse balanced partitions and the complexity of subgraph problems.
\newblock {\em {SIAM} J. Discret. Math.}, 25(2):631--644, 2011.

\bibitem{DBLP:journals/tcs/AndersenBY20}
Jonas~Bamse Andersen, J{\o}rgen Bang{-}Jensen, and Anders Yeo.
\newblock On the parameterized complexity of 2-partitions.
\newblock {\em Theor. Comput. Sci.}, 844:97--105, 2020.

\bibitem{DBLP:journals/tcs/Bang-JensenBHY18}
J{\o}rgen Bang{-}Jensen, St{\'{e}}phane Bessy, Fr{\'{e}}d{\'{e}}ric Havet, and
  Anders Yeo.
\newblock Out-degree reducing partitions of digraphs.
\newblock {\em Theor. Comput. Sci.}, 719:64--72, 2018.

\bibitem{DBLP:books/sp/BG2018}
J{\o}rgen Bang{-}Jensen and Gregory~Z. Gutin, editors.
\newblock {\em Classes of Directed Graphs}.
\newblock Springer Monographs in Mathematics. Springer, 2018.

\bibitem{DBLP:journals/siamcomp/BansalFKMNNS14}
Nikhil Bansal, Uriel Feige, Robert Krauthgamer, Konstantin Makarychev,
  Viswanath Nagarajan, Joseph Naor, and Roy Schwartz.
\newblock Min-max graph partitioning and small set expansion.
\newblock {\em {SIAM} J. Comput.}, 43(2):872--904, 2014.

\bibitem{DBLP:journals/tcs/BodlaenderJ95}
Hans~L. Bodlaender and Klaus Jansen.
\newblock Restrictions of graph partition problems. {Part} {I}.
\newblock {\em Theor. Comput. Sci.}, 148(1):93--109, 1995.

\bibitem{DBLP:journals/cj/Cai08}
Leizhen Cai.
\newblock Parameterized complexity of cardinality constrained optimization
  problems.
\newblock {\em Comput. J.}, 51(1):102--121, 2008.

\bibitem{DBLP:journals/tcs/ChenSSHY16}
Wenbin Chen, Nagiza~F. Samatova, Matthias~F. Stallmann, William Hendrix, and
  Weiqin Ying.
\newblock On size-constrained minimum $s$-$t$ cut problems and size-constrained
  dense subgraph problems.
\newblock {\em Theor. Comput. Sci.}, 609:434--442, 2016.

\bibitem{DBLP:journals/anor/ChevaleyreEEM08}
Yann Chevaleyre, Ulle Endriss, Sylvia Estivie, and Nicolas Maudet.
\newblock Multiagent resource allocation in $k$-additive domains: Preference
  representation and complexity.
\newblock {\em Ann. Oper. Res.}, 163(1):49--62, 2008.

\bibitem{DBLP:books/sp/CyganFKLMPPS15}
M.~Cygan, F.~V. Fomin, L.~Kowalik, D.~Lokshtanov, D.~Marx, M.~Pilipczuk,
  M.~Pilipczuk, and S.~Saurabh.
\newblock {\em Parameterized Algorithms}.
\newblock Springer, 2015.

\bibitem{DBLP:journals/siamcomp/DahlhausJPSY94}
Elias Dahlhaus, David~S. Johnson, Christos~H. Papadimitriou, Paul~D. Seymour,
  and Mihalis Yannakakis.
\newblock The complexity of multiterminal cuts.
\newblock {\em {SIAM} J. Comput.}, 23(4):864--894, 1994.

\bibitem{DBLP:journals/corr/abs-2307-01109}
Argyrios Deligkas, Eduard Eiben, Gregory~Z. Gutin, Philip~R. Neary, and Anders
  Yeo.
\newblock Complexity dichotomies for the maximum weighted digraph partition
  problem.
\newblock {\em CoRR}, 2023.

\bibitem{DBLP:series/txcs/DowneyF13}
R.~G. Downey and M.~R. Fellows.
\newblock {\em Fundamentals of Parameterized Complexity}.
\newblock Texts in Computer Science. Springer, 2013.

\bibitem{DBLP:journals/siamcomp/DowneyF95}
Rodney~G. Downey and Michael~R. Fellows.
\newblock Fixed-parameter tractability and completeness {I:} basic results.
\newblock {\em {SIAM} J. Comput.}, 24(4):873--921, 1995.

\bibitem{DBLP:conf/icpp/Duan021}
Yubin Duan and Jie Wu.
\newblock Joint optimization of {DNN} partition and scheduling for mobile cloud
  computing.
\newblock In {\em {ICPP}}, pages 21:1--21:10, 2021.

\bibitem{DBLP:journals/mansci/ErnstJK06}
Andreas~T. Ernst, Houyuan Jiang, and Mohan Krishnamoorthy.
\newblock Exact solutions to task allocation problems.
\newblock {\em Manag. Sci.}, 52(10):1634--1646, 2006.

\bibitem{HAO1994424}
J.X. Hao and J.B. Orlin.
\newblock A faster algorithm for finding the minimum cut in a directed graph.
\newblock {\em Journal of Algorithms}, 17(3):424--446, 1994.

\bibitem{DBLP:conf/infocom/HuBWL19}
Chuang Hu, Wei Bao, Dan Wang, and Fengming Liu.
\newblock Dynamic adaptive {DNN} surgery for inference acceleration on the
  edge.
\newblock In {\em INFOCOM}, pages 1423--1431, 2019.

\bibitem{DBLP:journals/ijfcs/JansenS10}
Klaus Jansen and Roberto Solis{-}Oba.
\newblock Approximation schemes for scheduling jobs with chain precedence
  constraints.
\newblock {\em Int. J. Found. Comput. Sci.}, 21(1):27--49, 2010.

\bibitem{DBLP:conf/coco/Karp72}
Richard~M. Karp.
\newblock Reducibility among combinatorial problems.
\newblock In {\em Complexity of Computer Computations}, pages 85--103, 1972.

\bibitem{DBLP:journals/tcom/KoprasBIKB22}
Bartosz Kopras, Bartosz Bossy, Filip Idzikowski, Pawel Kryszkiewicz, and Hanna
  Bogucka.
\newblock Task allocation for energy optimization in fog computing networks
  with latency constraints.
\newblock {\em {IEEE} Trans. Commun.}, 70(12):8229--8243, 2022.

\bibitem{DBLP:journals/csur/KwokA99}
Yu-Kwong Kwok and Ishfaq Ahmad.
\newblock Static scheduling algorithms for allocating directed task graphs to
  multiprocessors.
\newblock {\em {ACM} Comput. Surv.}, 31(4):406--471, 1999.

\bibitem{DBLP:conf/lcn/LiLLXJ21}
Jing Li, Weifa Liang, Yuchen Li, Zichuan Xu, and Xiaohua Jia.
\newblock Delay-aware {DNN} inference throughput maximization in edge computing
  via jointly exploring partitioning and parallelism.
\newblock In {\em {IEEE} LCN}, pages 193--200, 2021.

\bibitem{DBLP:journals/tmc/LiLLXJG23}
Jing Li, Weifa Liang, Yuchen Li, Zichuan Xu, Xiaohua Jia, and Song Guo.
\newblock Throughput maximization of delay-aware {DNN} inference in edge
  computing by exploring {DNN} model partitioning and inference parallelism.
\newblock {\em {IEEE} Trans. Mob. Comput.}, 22(5):3017--3030, 2023.

\bibitem{DBLP:journals/tc/LinSUGRC23}
Ching{-}Chi Lin, Junjie Shi, Niklas Ueter, Mario G{\"{u}}nzel, Jan Reineke, and
  Jian{-}Jia Chen.
\newblock Type-aware federated scheduling for typed {DAG} tasks on
  heterogeneous multicore platforms.
\newblock {\em {IEEE} Trans. Computers}, 72(5):1286--1300, 2023.

\bibitem{DBLP:journals/tpds/LiuZGHZS14}
Jing Liu, Qingfeng Zhuge, Shouzhen Gu, Jingtong Hu, Guangyu Zhu, and
  Edwin~Hsing{-}Mean Sha.
\newblock Minimizing system cost with efficient task assignment on
  heterogeneous multicore processors considering time constraint.
\newblock {\em {IEEE} Trans. Parallel Distributed Syst.}, 25(8):2101--2113,
  2014.

\bibitem{DBLP:journals/tcs/Marx06}
D.~Marx.
\newblock Parameterized graph separation problems.
\newblock {\em Theor. Comput. Sci.}, 351(3):394--406, 2006.

\bibitem{DBLP:conf/cats/MathiesonS08}
L.~Mathieson and S.~Szeider.
\newblock The parameterized complexity of regular subgraph problems and
  generalizations.
\newblock In {\em CATS}, pages 79--86, 2008.

\bibitem{DBLP:journals/aamas/NguyenNRR14}
Nhan{-}Tam Nguyen, Trung~Thanh Nguyen, Magnus Roos, and J{\"{o}}rg Rothe.
\newblock Computational complexity and approximability of social welfare
  optimization in multiagent resource allocation.
\newblock {\em Auton. Agent Multi-AG.}, 28(2):256--289, 2014.

\bibitem{DBLP:journals/tcs/OumSV14}
Sang{-}il Oum, Sigve~Hortemo S{\ae}ther, and Martin Vatshelle.
\newblock Faster algorithms for vertex partitioning problems parameterized by
  clique-width.
\newblock {\em Theor. Comput. Sci.}, 535:16--24, 2014.

\bibitem{DBLP:journals/tpds/PaganiPSCH17}
Santiago Pagani, Anuj Pathania, Muhammad Shafique, Jian{-}Jia Chen, and
  J{\"{o}}rg Henkel.
\newblock Energy efficiency for clustered heterogeneous multicores.
\newblock {\em {IEEE} Trans. Parallel Distributed Syst.}, 28(5):1315--1330,
  2017.

\bibitem{DBLP:journals/tjs/RajakKPRD23}
Ranjit Rajak, Shrawan Kumar, Shiv Prakash, Nidhi Rajak, and Pratibha Dixit.
\newblock A novel technique to optimize quality of service for directed acyclic
  graph {(DAG)} scheduling in cloud computing environment using heuristic
  approach.
\newblock {\em J. Supercomput.}, 79(2):1956--1979, 2023.

\bibitem{DBLP:journals/mst/ShachnaiZ17}
Hadas Shachnai and Meirav Zehavi.
\newblock Parameterized algorithms for graph partitioning problems.
\newblock {\em Theory Comput. Syst.}, 61(3):721--738, 2017.

\bibitem{DBLP:conf/iclr/Thost021}
Veronika Thost and Jie Chen.
\newblock Directed acyclic graph neural networks.
\newblock In {\em ICLR}, 2021.
\newblock \url{https://openreview.net/forum?id=JbuYF437WB6}.

\bibitem{DBLP:conf/aaai/000121a}
Yongjie Yang.
\newblock A model of winners allocation.
\newblock In {\em AAAI}, pages 5760--5767, 2021.

\end{thebibliography}
\end{document}